\documentclass[nonacm, sigconf]{acmart} 
\setcopyright{none}
\AtBeginDocument{%
  }

\usepackage{amsthm, amsmath, mathtools}
\usepackage{algorithm, algorithmic}
\usepackage{subcaption}
\usepackage{multirow}
\usepackage{color}

\author{Linh Tran}
\email{tranl3@rpi.edu}
\affiliation{%
  \institution{Rensselaer Polytechnic Institute}
  \city{Troy}
  \state{New York}
  \country{USA}
}

\author{Ana Milanova}
\email{milanova@cs.rpi.edu}
\affiliation{%
  \institution{Rensselaer Polytechnic Institute}
  \city{Troy}
  \state{New York}
  \country{USA}
}

\author{Stacy Patterson}
\email{sep@cs.rpi.edu}
\affiliation{%
  \institution{Rensselaer Polytechnic Institute}
  \city{Troy}
  \state{New York}
  \country{USA}
}

\begin{document}

\title{Improving Parameter-Efficient Federated Learning with Differentially Private Refactorization}

\begin{abstract}
  Federated Learning (FL) with parameter-efficient fine-tuning, such as Low-Rank Adaptation (LoRA), enables scalable model training on distributed data. However, when combined with Differential Privacy (DP), LoRA often introduces errors during global aggregation and amplifies the negative effect of DP noise. Existing cross-silo FL approaches mitigate the aggregation error by freezing one LoRA module and applying output perturbation. However, in a restricted low-rank subspaces, this additive noise frequently overwhelms the signals of the weight matrices, leading to suboptimal accuracy. To address this vulnerability, we propose \textbf{FedPower}, a differentially private cross-silo FL framework that reshapes server-side aggregation. Instead of perturbing mismatched low-rank factors, FedPower explicitly reconstructs and clips full-rank client updates to bound the sensitivity. The server then projects the exact aggregated update back into a secure low-rank space using \textbf{PowerDP}, a novel differentially private low-rank factorization mechanism. Based on simultaneous subspace iteration, PowerDP injects calibrated DP noise prior to the final orthonormalization step, effectively mitigates the negative effect of DP noise by preserving matrix orthogonality.
  
  We provide rigorous theoretical analyses establishing sensitivity bounds for subspace projections, proving that FedPower achieves both sample-level and client-level DP. Extensive experiments on various language understanding tasks in cross-silo FL settings show that FedPower is robust against tight privacy budgets while adding negligible computational overheads. Additional empirical study on different DP noise injection schemes validates the effectiveness of PowerDP in improving the tradeoff in accuracy and privacy. Evaluation on three different membership inference attacks validates the robustness and privacy-preserving capability of the proposed framework.
\end{abstract}

\maketitle

\section{Introduction}
\label{sec:intro}

The unprecedented capabilities of Large Language Models (LLMs) have driven their widespread adoption across highly sensitive domains, such as healthcare and finance. To leverage specialized, decentralized datasets (e.g., medical records, private financial transactions) without requiring organizations to directly share raw data, Federated Learning (FL) has emerged as a vital paradigm \cite{pmlr-v54-mcmahan17a}. However, deploying FL for modern LLMs with billions of parameters presents massive computational and communication bottlenecks. To mitigate this, Low-Rank Adaptation (LoRA) \cite{hu2021lora} has become the standard for efficient federated fine-tuning. By freezing the base model and updating only a pair of low-rank matrices, Federated LoRA, significantly reduces client-side computational costs and network communication overhead. We term this general FL framework that uses LoRA as FedLoRA for convenience.

While FL is designed to protect the client data by keeping raw data on local devices and only sharing model updates with a central server, federated LLMs remain vulnerable to privacy breaches \cite{GeipingBD020, MahendranV15}. It is well-established that LLMs can memorize their training data, making them highly susceptible to membership inference attacks \cite{shokri2017membership}. In FL, once the final fine-tuned global model is publicly released, honest-but-curious downstream users can interact with the LLM to execute membership inference attacks, effectively recovering the sensitive local samples originally held by participating clients. Consequently, relying solely on standard FL or FedLoRA is insufficient to protect training samples. It is necessary to adapt cryptographic or privacy frameworks in data-sensitive applications.

To provide mathematically rigorous guarantees against such threats, Differential Privacy (DP) \cite{dworkdp} is widely recognized as the gold standard. In this work, we focus on a Global DP framework, assuming a trusted server whose goal is to privatize and release a secure final model. In a Differentially Private FL (DP-FL) utilizing Global DP, the client model updates is clipped to bound the sensitivity, which can be done either by the clients or the server. Then, the server injects calibrated Gaussian noise directly into the global aggregate before updating the global model, achieving Global DP.
This global perturbation masks the contribution of any individual training example, ensuring that the final published LLM remains private against downstream membership inference attacks. However, standard full-parameter DP-FL requires clients to compute and transmit billion-parameter model updates, which can be expensive in terms of both computation and communication for resource-constrained participants.

\begin{figure*}[t]
    \centering
    \includegraphics[width=0.9\linewidth]{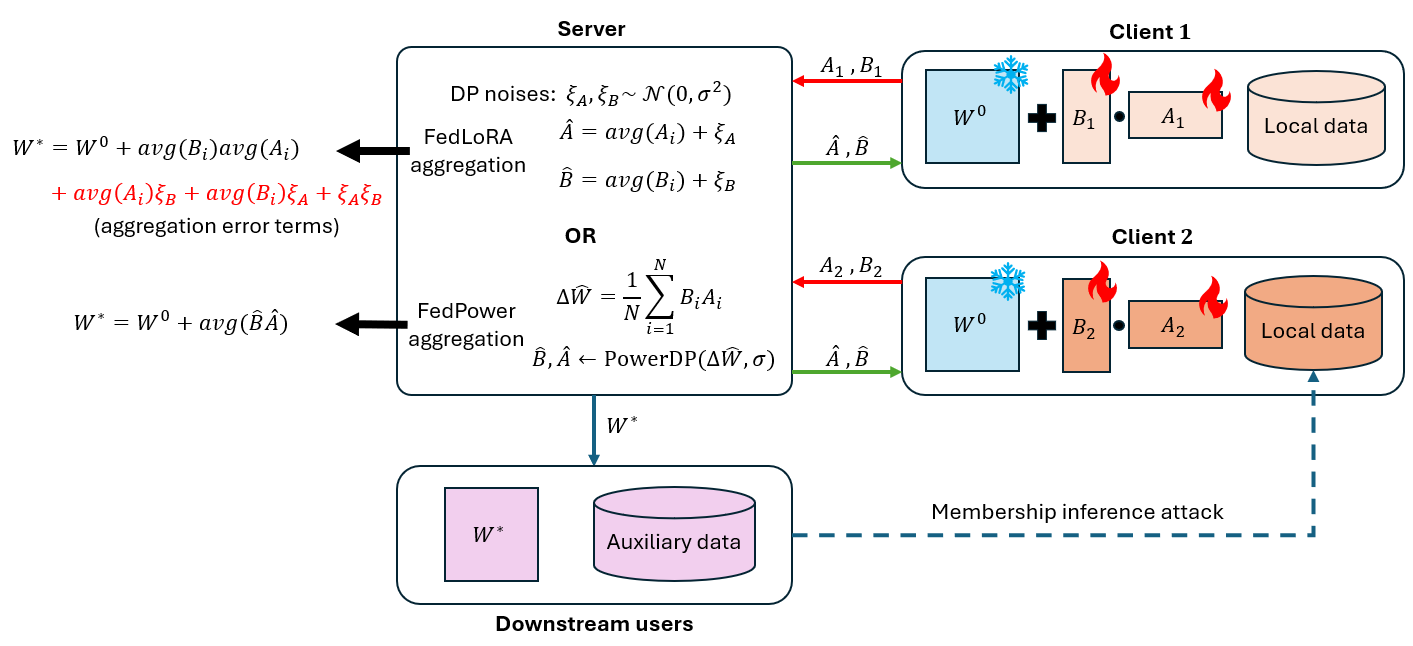}
    \caption{An illustration of the FL algorithms with LoRA in a global DP setting: existing FedLoRA and our proposed method FedPower. In both frameworks, each client trains individual LoRA modules locally and sends to server. In FedLoRA, the server aggregates each LoRA modules separately and adds Gaussian noise, and the resulting global weight contains extra error terms which degrade model accuracy. In FedPower, the server aggregates the products of each LoRA pairs and jointly privatizes and refactorizes the new LoRA modules via PowerDP. At the end of FL training, the server releases the trained global model to downstream users who may perform membership inference attack on the client's private data.}
    \Description{An illustration of the FL algorithms with LoRA in a global DP setting: existing FedLoRA and our proposed method FedPower. In both frameworks, each client trains individual LoRA modules locally and sends to server. In FedLoRA, the server aggregates each LoRA modules separately and adds Gaussian noise, and the resulting global weight contains extra error terms which degrade model accuracy. In FedPower, the server aggregates the products of each LoRA pairs and jointly privatizes and refactorizes the new LoRA modules via PowerDP. At the end of FL training, the server releases the trained global model to downstream users who may perform membership inference attack on the client's private data.}
    \label{fig:fedlora}
\end{figure*}

Naturally, combining Global DP with FedLoRA presents an ideal solution to satisfy both efficiency and privacy. However, simply incorporating DP into standard FedLoRA frameworks results in severe utility degradation due to the mathematical and structural conflicts \cite{sun2024improving}. First, there is an inherent mismatch in the aggregation of LoRA modules. In a precise federated averaging scheme, the server would aggregate the true weight updates from clients, computing $Average(B_i \cdot A_i)$. In contrast, FedLoRA independently averages the low-rank modules to compute $Average(B_i) \cdot Average(A_i)$. Because the average of products does not equal the product of averages, FedLoRA aggregation introduces additional mathematical error terms (cross-terms) into the global update. This mathematical mismatch leads to amplified DP noise during aggregation. When Global DP noise is injected on top of this mismatch aggregation process, the privacy noise interacts with and compounds the error terms from the FedLoRA aggregation as shown in Figure~\ref{fig:fedlora}. This interaction significantly amplifies the negative impact of the DP noise, which leads to severe accuracy degradation in the final fine-tuned LLM.

To bridge this gap and balance the tradeoff between privacy, efficiency, and accuracy, we present \textbf{FedPower}, a robust DP-FL framework tailored for LLMs that entirely avoids noisy low-rank aggregation. Instead of averaging mismatched local LoRA modules, the server aggregates the merged (full-rank equivalent) updates and subsequently performs a private reparameterization, or refactorization, back into the low-rank space. To estimate this low-rank reparameterization while enforcing data privacy, we propose \textbf{PowerDP}, a novel differentially private low-rank refactorization mechanism.

Unlike standard output perturbation techniques that naively add noise to the final matrices, PowerDP seamlessly embeds the DP noise injection process directly into a lightweight power iteration sequence. This mitigates the negative impact of noise on the principal singular vectors, preserving the signals of the LLM updates while satisfying rigorous DP guarantee. Through FedPower, we demonstrate that highly efficient federated LLM fine-tuning with LoRA can be achieved without compromising on the privacy guarantees or degrading the model accuracy.

Our main contributions are:
\begin{enumerate}
    \item We introduce \textbf{FedPower}, a framework that resolves the structural aggregation mismatch in standard Federated LoRA. By shifting the server-side aggregation to the merged, full-rank equivalent updates prior to low-rank refactorization, FedPower eliminates the cross-term mathematical errors. This prevents the compound amplification of DP noise, providing a superior privacy-accuracy tradeoff for decentralized LLM fine-tuning.
    \item We propose \textbf{PowerDP}, a novel differentially private low-rank refactorization algorithm. Rather than naively applying output perturbation to the final LoRA matrices, PowerDP embeds calibrated Gaussian noise directly into a power iteration sequence. This design effectively preserves the principal singular vectors and signals of the model updates while maintaining the same level of privacy guarantee.
    \item We provide a comprehensive theoretical analysis of our framework, proving that PowerDP tightly bounds the $\ell_2$-sensitivity of the subspace projections. We mathematically demonstrate that PowerDP and FedPower satisfies $(\epsilon, \delta)$-Differential Privacy, providing a provably secure defense against downstream membership inference attacks.
    \item We conduct extensive experiments on four language understanding tasks of the GLUE benchmark \cite{wang2018glue}. The empirical results show that FedPower achieves high model accuracy and outperforms existing DP-FedLoRA baselines, while maintaining negligible computational overhead at the server and adding absolutely no extra computational or communication cost to resource-constrained clients.
    \item We evaluate the privacy-preserving performance of FedPower against three different membership inference attacks. Across all levels of privacy budgets, all attacks perform no better than random guessing. This demonstrates that FedPower successfully protect the private training samples from membership inference attacks.
\end{enumerate}

The rest of the paper is organized as follows. In Section~\ref{sec:related}, we discuss the related work on private low-rank factorization and private FL with LoRA. In Section~\ref{sec:background}, we give an overview of Federated Learning and Differential Privacy, and we discuss the challenge of DP-FL with LoRA. In Section~\ref{sec:method}, we describe our proposed methods, the private PowerDP mechanism and FedPower algorithm. In Section~\ref{sec:analysis}, we rigorously analyze the privacy guarantee of PowerDP and FedPower and provide detailed proofs. In Section~\ref{sec:exp}, we conduct extensive experimental study to demonstrate the effectiveness and benefit of FedPower and PowerDP. Finally in Section~\ref{sec:conclude}, we conclude the paper with discussion on future work.

\section{Related Work}
\label{sec:related}

\paragraph{Private Low-Rank Factorization.} A long line of research \cite{blocki2012johnson, hardt2012beating, kapralov2013differentially, hardt2014noisy, balcan2016improved} has explored differentially private low-rank approximations in the context of principal component analysis (PCA). However, the objectives of traditional PCA differ fundamentally from those of FL. More recent methods \cite{yu2021large, zhou2021bypassing, pmlr-v134-kairouz21a, yu2021do} employ low-rank approximations to mitigate the impact of DP noise when training large models such as LLMs. Yu et al. \cite{yu2021large, yu2021do} introduce gradient reparametrization techniques with DP output perturbation. Zhou et al. \cite{zhou2021bypassing} propose a private low-rank gradient projection with a focus on empirical risk minimization. Similarly, Kairouz et al. \cite{pmlr-v134-kairouz21a} introduce a private subspace estimation with the same application in empirical risk minimization. While these approaches align more closely with our training objectives, they are all designed for centralized training environments, rendering them inapplicable to the distributed nature of FL.

\paragraph{FL with LoRA.} Zhang et al. \cite{zhang2024towards} was the first to study a straightforward LoRA integration in FL that lets each client train and update their LoRA matrices $A$ and $B$, and the server averages each LoRA matrix separately. However, this approach leads to suboptimal training accuracy due to the inherent mismatch LoRA aggregation. Other works have proposed to accurately aggregate product of LoRA modules \cite{wang2024flora, sun2024improving, singhal2025fedex}. Wang et al. \cite{wang2024flora} stack LoRA matrices, while Singhal et al. \cite{singhal2025fedex} introduces an additional residual matrix to correct the aggregation error. Nevertheless, they do not consider data privacy, and it is not clear how DP can be applied to their shared components. 

\paragraph{Private FL with LoRA.} The straightforward integration of FedLoRA with DP guarantees has been explored in several recent works \cite{liu2025differentially, 11216445}. Other works such as \cite{kang2024federated} focus on a split LoRA learning scenario, while Xu et al. \cite{xu2024dp} focus on FL with adaptive LoRA ranks. Both of these works are designed for different model setups which are fundamentally different from a standard FL framework that we focus on. Notably, FFA-LoRA \cite{sun2024improving} was the first to identify the compounded error arising from mismatched LoRA aggregation and DP noise injection. To circumvent this, the authors proposed freezing one of the LoRA modules. However, this restricts the model's expressiveness and can hinder the convergence rate. In contrast, FedPower shift the aggregation from the low-rank space to the full-rank space before privately factorizing the aggregated full-rank model updates into new LoRA modules via PowerDP for the next global round. Full-rank aggregation completely avoids the mathematical errors arises in FedLoRA, and PowerDP ensures that the new factorized LoRA modules maintain meaningful signals and expressiveness.

\section{Preliminaries}
\label{sec:background}

\subsection{Federated Learning}
\label{sec:fl}
We consider a standard cross-silo FL architecture consisting of $N$ clients and a central server. Let the global dataset be denoted as $D = \cup_{i=1}^N D_i$, where each client $i$ holds a private local dataset $D_i$. The server maintains a global pre-trained LLM parameterized by $W$. 

In standard FL \cite{pmlr-v54-mcmahan17a}, the global model is optimized through an iterative process of local training and global aggregation. At the beginning of each global round (also called communication round), a subset of clients download the current global model $W$. Each client $i$ then aims to minimize its local empirical loss
\begin{align}
    \mathcal{L}_i = \frac{1}{|D_i|} \sum_{(x,y)\in D_i} \ell(W, x, y),
\end{align}
where $\ell$ denotes the task-specific loss function (e.g., cross-entropy loss for sequence-to-sequence generation). To do so, the client performs several iterations of local gradient descent, updating its local model $W_i$ as
\begin{align}
    W_i \gets W_i - \eta \nabla_{W_i} \mathcal{L}_i.
\end{align}

\begin{algorithm}[t]
\caption{FedLoRA (non-private) \cite{zhang2024towards}}
\label{alg:fedlora}
\begin{algorithmic}[1]
    \STATE {\textbf{Input:}} number of global rounds $T$, number of local rounds $L$, global weight $W^0 \in \mathbf{R}^{m \times n}$, rank $r$, learning rate $\eta$.
    \STATE \textcolor{blue}{// Server initializes global LoRA modules.}
    \STATE $A^0 \sim \mathcal{N}(0, \sigma^2)^{r \times n}$.
    \STATE $B^0 \gets \mathbf{0}_{m \times r}$.
    \FOR {global round $t \gets 1 \ldots T$}
        \STATE Server subsamples client set $\mathcal{C}^t$ with sampling rate $q_c$.
        \FOR {client $i \in \mathcal{C}$ in parallel}
            \STATE \textcolor{blue}{// Client initialize local LoRA modules.}
            \STATE $A^{t,0}_i \gets A^{t-1}$
            \STATE $B^{t,0}_i \gets B^{t-1}$.
            \FOR {local round $l \gets 1 \ldots L$}
                \STATE Client subsamples batch $\mathcal{B}^{t,l}$ with sampling rate $q_s$.
                \STATE \textcolor{blue}{// Client computes loss.}
                \STATE $\mathcal{L}_i^{t,l} \gets \frac{1}{| \mathcal{B}^{t,l} |} \sum_{(x,y)\in \mathcal{B}^{t,l}} \ell(A^{t,l-1}_i, B^{t,l-1}_i, x, y)$.
                \STATE \textcolor{blue}{// Client updates local LoRA modules.}
                \STATE $A^{t,l}_i \gets A^{t,l-1}_i - \eta \nabla_a \mathcal{L}_i^{t,l}$
                \STATE $B^{t,l}_i \gets B_i^{t,l-1} - \eta \nabla_b \mathcal{L}^{t,l}_i$.
            \ENDFOR
            \STATE Clients send $A^{t,L}_i, B^{t,L}_i$ to server.
        \ENDFOR
        \STATE \textcolor{blue}{// Server aggregates and updates global LoRA modules.}
        \STATE $\displaystyle A^t \gets  \frac{1}{|\mathcal{C}|} \sum_{i\in \mathcal{C}} A_i^{t,L}$.
        \STATE $\displaystyle B^t \gets \frac{1}{|\mathcal{C}|} \sum_{i\in \mathcal{C}} B_i^{t,L}$.
    \ENDFOR
    \STATE \textcolor{blue}{// Server merges LoRA for final release.}
    \STATE $W^T \gets W^0 + B^T A^T$.
    \STATE {\textbf{Output:}} $W^T$. 
\end{algorithmic}
\end{algorithm}

At the end of the local training phase, participating clients transmit their updated local models $W_i$ back to the server. The server then updates the global model parameters by averaging these local models
\begin{align}\label{equation:fedavg:aggregate}
    W \gets \frac{1}{N} \sum_{i=1}^N W_i.
\end{align}
This iterative process is repeated until the global model reaches convergence.

Applying full-parameter FL to modern billion-parameter LLMs requires the clients to compute, store, and communicate giant model updates, which is prohibitively expensive. To minimize these computational and communication overheads, the system utilizes Low-Rank Adaptation (LoRA) \cite{hu2021lora}, a parameter-efficient fine-tuning technique that freezes the pre-trained model weights and only fine-tune two trainable low-rank modules.

Specifically, for a pre-trained weight matrix $W \in \mathbb{R}^{m \times n}$, LoRA constrains the weight update $\Delta W$ by representing it as the product of two low-rank matrices $B \in \mathbb{R}^{m \times r}$ and $A \in \mathbb{R}^{r \times n}$, where the rank $r \ll \min(m,n)$. During training, $W$ is frozen, and only $A$ and $B$ are updated. To ensure that the initial state of the adapted model remains identical to the pre-trained model, $A$ is initialized randomly from a Gaussian distribution, and $B$ is initialized to zero, yielding $\Delta W = BA = 0$ at the start of training. The forward pass for an input $x$ is expressed as
\begin{align}
h = W x + \Delta W x = W x + B A x.
\end{align}

In a Federated LoRA (FedLoRA) framework, clients initialize local LoRA modules $A_i$ and $B_i$, train them over their private data, and send only these lightweight low-rank matrices to the server. To aggregate them, the server averages on the low-rank modules independently (i.e., computing $\frac{1}{N}\sum B_i$ and $\frac{1}{N}\sum A_i$). The complete training process for standard FedLoRA, from \cite{zhang2024towards}, is shown in Algorithm~\ref{alg:fedlora}.

\subsection{Differential Privacy}
\label{sec:dp}

Differential Privacy (DP) \cite{dworkdp} is a mathematically rigorous
framework that protects an individual data sample via the notion of \textit{adjacent datasets}. Two datasets $D, D' \in \mathcal{D}$ are adjacent if they differ by exactly one sample (e.g., via the addition, removal, or substitution of a single sample). DP restricts how much the output distribution of a randomized algorithm can change when evaluated on adjacent datasets. The standard $(\epsilon, \delta)$-DP is defined as follows.

\begin{definition} \textbf{(Differential Privacy)}
A randomized mechanism $\mathcal{M}: \mathcal{D} \rightarrow \mathcal{R}$ with domain $\mathcal{D}$ and range $\mathcal{R}$ satisfies $(\epsilon, \delta)$-DP if for any two adjacent datasets $D, D' \in \mathcal{D}$ and for any subset of outputs $\mathcal{S}\subseteq \mathcal{R}$, it holds that
\begin{align}
    \text{Pr}[\mathcal{M}(D)\in \mathcal{S}] \leq e^\epsilon \text{Pr}[\mathcal{M}(D')\in \mathcal{S}] + \delta .
\end{align}
\end{definition}
Here, $\epsilon > 0$ is the privacy budget bounding the multiplicative difference in the output distributions, and $\delta \in [0,1)$ is a small relaxation parameter allowing the tight $\epsilon$ bound to fail with probability at most $\delta$.

To satisfy DP for a deterministic function $f: \mathcal{D} \rightarrow \mathbb{R}^d$, a standard approach is to inject calibrated random noise into the true output. The noise is typically generated from a probability distribution such as Laplace or Gaussian. The required variance of noise, or noise scale $\sigma^2$, relies on the data sensitivity and the desired $\epsilon$ budget. For example, the Gaussian noise mechanism achieves $(\epsilon, \delta)$-DP by perturbing the output of $f$ with random noise drawn from a zero-mean multivariate Gaussian distribution
\begin{align}
    \mathcal{M}(D) = f(D) + \mathcal{N}(0, \sigma^2 I) .
\end{align}
The variance of this Gaussian distribution, $\sigma^2$, is referred to as the \textit{noise scale} (also called noise multiplier). The calibration of the noise scale $\sigma$ depends directly on the desired $\epsilon$ budget and the dataset sensitivity $\Delta_2 f$ defined as
\begin{align}
    \Delta_2 f = \max_{D \sim D'} \| f(D) - f(D') \|_2 .
\end{align}
One of the crucial properties of differential privacy is its robustness to post-processing. This property ensures that an adversary cannot degrade the privacy guarantee by applying additional computations or using auxiliary knowledge.

\begin{proposition} \textbf{(Post-Processing)} \label{theorem:post:processing}
Let $\mathcal{M}: \mathcal{D} \rightarrow \mathcal{R}$ be a randomized mechanism that satisfies $(\epsilon, \delta)$-DP. For any arbitrary randomized mapping $g: \mathcal{R} \rightarrow \mathcal{R}'$ that is independent of the private dataset, the composite mechanism $g \circ \mathcal{M}: \mathcal{D} \rightarrow \mathcal{R}'$ also satisfies $(\epsilon, \delta)$-DP.
\end{proposition}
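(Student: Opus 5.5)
The plan is to first prove the claim for a deterministic post-processing map $g:\mathcal{R}\to\mathcal{R}'$, and then extend it to randomized maps by viewing them as mixtures of deterministic ones. Fix adjacent datasets $D \sim D'$ and a measurable set $\mathcal{S}' \subseteq \mathcal{R}'$. Because $g$ does not depend on the data, the event $\{g(\mathcal{M}(D)) \in \mathcal{S}'\}$ coincides with $\{\mathcal{M}(D) \in g^{-1}(\mathcal{S}')\}$. Here the preimage $\mathcal{S} := g^{-1}(\mathcal{S}') \subseteq \mathcal{R}$ is a fixed output set that does not depend on $D$ or $D'$. Applying the $(\epsilon,\delta)$-DP definition of $\mathcal{M}$ to this $\mathcal{S}$ gives
\begin{align*}
\Pr[g(\mathcal{M}(D))\in \mathcal{S}'] = \Pr[\mathcal{M}(D)\in \mathcal{S}] \le e^{\epsilon}\Pr[\mathcal{M}(D')\in \mathcal{S}] + \delta = e^{\epsilon}\Pr[g(\mathcal{M}(D'))\in \mathcal{S}'] + \delta .
\end{align*}
This settles the deterministic case.

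For a randomized $g$, we write $g(y) = \tilde g(y, Z)$. Here $Z$ is an auxiliary random seed with some distribution $\mu$, drawn independently of both $\mathcal{M}$ and the dataset; this independence is exactly the hypothesis of the proposition. For each fixed seed value $z$, the map $\tilde g(\cdot, z)$ is deterministic, so the first step gives
\begin{align*}
\Pr[\tilde g(\mathcal{M}(D),z)\in\mathcal{S}'] \le e^{\epsilon}\Pr[\tilde g(\mathcal{M}(D'),z)\in\mathcal{S}'] + \delta .
\end{align*}
We then integrate both sides against $\mu(dz)$. By independence and Fubini, the left side becomes $\Pr[g(\mathcal{M}(D))\in\mathcal{S}']$. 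The right side becomes $e^{\epsilon}\Pr[g(\mathcal{M}(D'))\in\mathcal{S}'] + \delta$, since $\int \delta \, d\mu = \delta$. This averaging step is what keeps the additive $\delta$ from growing.

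The main obstacle is measure-theoretic rather than conceptual. We must ensure that $g^{-1}(\mathcal{S}')$ is measurable, so that the DP definition, which quantifies over subsets $\mathcal{S}\subseteq\mathcal{R}$, applies to it. We also need the joint measurability of $\tilde g$ required for Fubini. I would handle both by assuming $g$ is a measurable map, or a Markov kernel in the randomized case, which is the standard convention and is implicit in treating $g\circ\mathcal{M}$ as a mechanism. The key logical point to state explicitly is the independence of $g$ from the private data. If $g$ were allowed to depend on $D$, the preimage set would differ between $D$ and $D'$, and the argument would break.
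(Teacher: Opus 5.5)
Your proof is correct. The paper does not prove this proposition; it invokes it as a standard property of differential privacy, citing Dwork. Your argument is the standard textbook proof: preimages handle a deterministic $g$, and averaging over the seed handles a randomized $g$ viewed as a mixture of deterministic maps. You are also right to require that $g$'s seed be independent of $\mathcal{M}$'s internal coins, not merely of the dataset. That condition is implicit in, rather than literally stated by, the paper's hypothesis.
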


In many practical applications, such as training machine learning models via iterative algorithms, multiple DP mechanisms are applied sequentially to randomly sampled subsets (batches) of a dataset. To bound the cumulative privacy loss, we rely on the moment accountant \cite{abadi2016deep, mcmahan2017learning} which provides a much tighter bound than the standard sequential composition and advanced composition.

\begin{theorem} \textbf{(Moment Accountant)} \label{theorem:advanced}
Let $\mathcal{M}$ be a sequential mechanism applied over $T$ iterations with sample subsampling rate $q$. There exist constants $c_1,c_2$ such that for any $\epsilon < c_1 q^2 T$ and $\delta > 0$, $\mathcal{M}$ satisfies $(\epsilon, \delta)$-DP if
\begin{align}
    \sigma \geq c_2 \frac{q\sqrt{T\log{(1/\delta)}}}{\epsilon} .
\end{align}
\end{theorem}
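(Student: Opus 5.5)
The plan is the standard moments-accountant argument of Abadi et al., bounding the log-moment generating function of the privacy loss for a single subsampled Gaussian step and then converting to $(\epsilon,\delta)$. We treat $\mathcal{M}$ as $T$ adaptive compositions of the subsampled Gaussian mechanism. Each step queries a batch drawn with rate $q$, and the query has $\ell_2$-sensitivity normalized to $1$ (e.g.\ via clipping). Gaussian noise of standard deviation $\sigma$ is then added. For adjacent $D,D'$ and output $o$, define the privacy loss $c(o)=\log\frac{\Pr[\mathcal{M}(D)=o]}{\Pr[\mathcal{M}(D')=o]}$ and the log-moment
\begin{align}
\alpha_{\mathcal{M}}(\lambda)=\max_{D\sim D'}\log \mathbb{E}_{o\sim\mathcal{M}(D)}\bigl[e^{\lambda c(o)}\bigr].
\end{align}

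The proof has three steps.

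\textbf{Composability.} Under adaptive composition, the log-moments add: $\alpha_{\mathcal{M}}(\lambda)\le\sum_{t=1}^T\alpha_{\mathcal{M}_t}(\lambda)$. This follows by conditioning on past outputs and factoring the expectation.

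\textbf{Tail bound.} By Markov's inequality applied to $e^{\lambda c}$, the mechanism is $(\epsilon,\delta)$-DP with
\begin{align}
\delta=\min_{\lambda}\exp\bigl(\alpha_{\mathcal{M}}(\lambda)-\lambda\epsilon\bigr).
\end{align}

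\textbf{Per-step bound (the main obstacle).} We need to show that for $q<1/(16\sigma)$ and $\lambda\le\sigma^2\log(1/(q\sigma))$, a single subsampled Gaussian step satisfies
\begin{align}
\alpha(\lambda)\le \frac{q^2\lambda(\lambda+1)}{(1-q)\sigma^2}+O\!\left(\frac{q^3\lambda^3}{\sigma^3}\right).
\end{align}
First I would reduce to one dimension. This is valid because the worst case is a pair of datasets differing in one point whose contribution shifts the mean by $1$. That gives the mixture $\mu_1=(1-q)\mathcal{N}(0,\sigma^2)+q\mathcal{N}(1,\sigma^2)$ against $\mu_0=\mathcal{N}(0,\sigma^2)$. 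We must bound $\mathbb{E}_{\mu_0}[(\mu_1/\mu_0)^{\lambda}]$ and $\mathbb{E}_{\mu_1}[(\mu_1/\mu_0)^{\lambda}]$. Next, write $\mu_1/\mu_0=1+q(e^{(2z-1)/(2\sigma^2)}-1)$ and expand binomially in $q$. The zeroth- and first-order terms vanish or contribute $1$. The second-order term gives $q^2\lambda(\lambda+1)/((1-q)\sigma^2)$. The higher-order terms are controlled using the Gaussian moment generating function together with the constraint $\lambda\le\sigma^2\log(1/(q\sigma))$. This is where the condition $\epsilon<c_1q^2T$ will come from. Most of the technical work sits in this tail-sum estimate, and I would follow Abadi et al.'s Lemma 3 closely.

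\textbf{Conclusion.} Summing over $T$ steps gives $\alpha_{\mathcal{M}}(\lambda)\le Tq^2\lambda^2/\sigma^2$ up to constants. It then suffices to verify two conditions:
\begin{align}
Tq^2\lambda^2/\sigma^2\le\lambda\epsilon/2, \qquad e^{-\lambda\epsilon/2}\le\delta.
\end{align}
Choose $\lambda=2\log(1/\delta)/\epsilon$, so the second condition holds. The first then reduces to $\sigma^2\ge c\,q^2T\log(1/\delta)/\epsilon^2$, which is the stated bound with a suitable $c_2$. Finally, the admissibility requirement $\lambda\le\sigma^2\log(1/(q\sigma))$ must hold at this choice of $\lambda$. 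Substituting the bound on $\sigma$ shows it is guaranteed once $\epsilon<c_1q^2T$ for a suitable constant $c_1$, which closes the argument.
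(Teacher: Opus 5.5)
You follow the same route as the source the paper relies on. The paper gives no proof of its own; it simply invokes Theorem 1 of Abadi et al., whose proof is exactly composability, the Markov tail bound, the per-step Lemma 3, and the choice $\lambda = 2\log(1/\delta)/\epsilon$.

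The gap is in your closing step, where you assert that the per-step lemma's hypotheses follow from $\epsilon < c_1 q^2 T$ alone. That lemma needs $\sigma \ge 1$, $q < 1/(16\sigma)$, and a positive integer $\lambda \le \sigma^2\log(1/(q\sigma))$, and you check only the last inequality. Because the theorem bounds $\sigma$ only from below, you cannot substitute that bound into $\sigma^2\log(1/(q\sigma))$. That function decreases once $q\sigma > e^{-1/2}$ and is negative for $q\sigma > 1$, and $q < 1/(16\sigma)$ fails for large $\sigma$ anyway. You must first reduce to $\sigma_* = c_2 q\sqrt{T\log(1/\delta)}/\epsilon$, using that each step's log-moment is nonincreasing in $\sigma$: extra independent Gaussian noise is post-processing, so data processing for R\'enyi divergence applies.

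At $\sigma_*$ the hypotheses read as follows:
\begin{itemize}
\item $q\sigma_* < 1/16 \iff \epsilon > 16 c_2 q^2\sqrt{T\log(1/\delta)}$;
\item $\sigma_* \ge 1 \iff \epsilon \le c_2 q\sqrt{T\log(1/\delta)}$;
\item $\lambda \ge 1 \iff \epsilon \le 2\log(1/\delta)$. Rounding $\lambda$ up to $1$ instead still forces $\epsilon = O(c_2^2\log(1/\delta))$ through your first condition.
\end{itemize}
None of these follows from $\epsilon < c_1 q^2 T$. Moreover, your derivation of $\lambda \le \sigma_*^2\log(1/(q\sigma_*))$ from $\epsilon < c_1 q^2 T$ implicitly needs $\log(1/(q\sigma_*))$ bounded below by a positive constant, which is essentially the first of them.

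These side conditions are not cosmetic. They are what Abadi et al.'s one-line parameter check glosses over, and the statement as transcribed fails without them. For instance, take $q = 1$ and a unit-norm contribution that is identical at every step. The $T$-fold composition is then equivalent to one Gaussian mechanism with noise $\sigma/\sqrt{T}$. At $\sigma_*$ the privacy loss under $D$ is therefore distributed as $\mathcal{N}(\rho^2/2,\rho^2)$ with $\rho = \epsilon/(c_2\sqrt{\log(1/\delta)})$. Let $\epsilon \to \infty$ with $\delta<1$ fixed; this is allowed, since only $\epsilon < c_1 T$ is imposed and $T$ is arbitrary. The mean $\rho^2/2 = \epsilon^2/(2c_2^2\log(1/\delta))$ then outgrows $\epsilon$. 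The loss exceeds $\epsilon$ with probability tending to $1$ under $D$, while $e^{\epsilon}$ times the corresponding probability under $D'$ tends to $0$, so the mechanism is not $(\epsilon,\delta)$-DP.

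What your argument actually proves is the theorem under the additional conditions above, and you should state them explicitly. With them in place the rest goes through: round $\lambda$ up to an integer, and absorb the $O(q^3\lambda^3/\sigma^3)$ term via $q\lambda/\sigma \le q\sigma\log(1/(q\sigma)) \le 1/e$ and $1/(1-q)\le 2$.

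There is also a smaller slip in your per-step sketch. The reverse-direction quantity is $\mathbb{E}_{\mu_0}[(\mu_0/\mu_1)^{\lambda}]$, not $\mathbb{E}_{\mu_0}[(\mu_1/\mu_0)^{\lambda}]$, and it is not a polynomial in $q$. Abadi et al.\ handle it by expanding $\mathbb{E}_{\mu_1}[(1+(\mu_0-\mu_1)/\mu_1)^{\lambda+1}]$ and using $\mu_1 \ge (1-q)\mu_0$, which is where the factor $1/(1-q)$ comes from. The estimate $e^{1/\sigma^2}-1 \le 2/\sigma^2$ in the second-order term is where $\sigma \ge 1$ enters.
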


This theorem demonstrates that the privacy budget $\epsilon$ degrades sub-linearly (proportional to $\sqrt{k}$). On the other hand, the subsampling rate $q \leq 1$ amplifies the privacy as smaller subsampling rate results in better privacy budget.

While standard DP protects individual samples within a database, also referred to as sample-level DP, it may be unsuitable for cross-device FL, where there are thousands to million of clients and each client represent a private user. In such scenarios, the goal is to protect all data belonging to a specific user using the notion of client-level DP. Let a distributed dataset $D$ consist of data from $N$ clients, $D = \cup_{i=1}^N D_i$, where $D_i$ represents the local dataset of client $i$. Two datasets $D$ and $D'$ are considered \textit{client-adjacent} if they differ by the addition or removal of a single client's entire local dataset.

\begin{definition} \textbf{(Client-Level DP)}
A randomized mechanism $\mathcal{M}: \mathcal{D} \rightarrow \mathcal{R}$ satisfies $(\epsilon, \delta)$-client-level DP if for any two client-adjacent datasets $D, D' \in \mathcal{D}$ and for any subset of outputs $\mathcal{S}\subseteq \mathcal{R}$, it holds that
\begin{align}
    \text{Pr}[\mathcal{M}(D)\in \mathcal{S}] \leq e^\epsilon \text{Pr}[\mathcal{M}(D')\in \mathcal{S}] + \delta .
\end{align}
\end{definition}
To achieve client-level DP, the noise scale $\sigma^2$ must be calibrated according to the sensitivity of the mechanism with respect to an entire client's contribution, rather than a single sample.

Our primary goal is to protect the sample data of each client against \textit{downstream inference attacks} (e.g., membership inference attacks \cite{shokri2017membership}) executed on the final released model. We operate under a Global DP trust model. We assume the central server is honest during the training phase; it faithfully executes the aggregation and applies privacy noise. Furthermore, we assume the communication channels between clients and the server are secured. 

The primary adversary is an external downstream user who obtains access to the final fine-tuned LLM ($W_{final}$). The adversary's objective is to determine whether a specific data sample $(x, y)$ was present in a client's dataset $D_i$. To defend against this adversary, our system aims to provide sample-level $(\epsilon, \delta)$-DP. This ensures that the presence or absence of any single data sample does not significantly alter the final released model weights, which bounds the success rate of membership inference attacks.

\subsection{Differentially Private FedLoRA}
\label{sec:dp_fedlora}

While standard FedLoRA provides a highly communication-efficient protocol, it introduces a structural flaw during the aggregation phase. In an ideal full-parameter FL setup, the server aggregates the true effective weight updates from all clients. For LoRA, the exact effective weight update from client $i$ is the full-rank equivalent matrix $\Delta W_i = B_i A_i$. Thus, the true federated average of the model updates should be $\frac{1}{N} \sum_{i=1}^N (B_i A_i)$.

However, as shown in Algorithm~\ref{alg:fedlora}, standard FedLoRA sidesteps this by performing component-wise averaging on the low-rank modules independently. Because the product of averages does not equal the average of products, this straightforward implementation introduces a mathematical discrepancy compared to standard FL on full-rank weights
\begin{align}\label{equation:lora_eq}
    \underbrace{\Big (\frac{1}{N} \sum_{i=1}^N B_i \Big) \cdot \Big( \frac{1}{N} \sum_{i=1}^N A_i \Big)}_{\text{FedLoRA Aggregation}} \neq \underbrace{\frac{1}{N} \sum_{i=1}^N (B_i A_i)}_{\text{Ideal Aggregation}}.
\end{align}
This discrepancy introduces mathematical errors, specifically the unintended cross-terms between different clients' LoRA modules, into the global updates. In non-private settings, these aggregation errors are introduced over multiple global rounds, which can slow down the convergence. However, this mathematical mismatch becomes a critical bottleneck when extended to a DP-FL setting.

When DP noise is injected into the independent low-rank modules, the underlying aggregation error from Equation~\ref{equation:lora_eq} becomes significantly pronounced. As the global model reconstructs the merged weight update ($B_{global} \cdot A_{global}$), the injected DP noise interacts multiplicatively with both the base parameters and the pre-existing cross-term errors. Consequently, this naive combination of DP and FedLoRA results in a substantial reduction in model accuracy, making it ineffective for practical downstream tasks \cite{sun2024improving}.

\section{Methodology}
\label{sec:method}

Addressing the vulnerabilities of prior approaches, we propose \textbf{FedPower} which reshapes the server-side aggregation pipeline. Instead of perturbing the low-rank modules directly, the server reconstructs the full-rank updates, clips them to bound the sensitivity, aggregates them exactly, and finally projects them back into a secure low-rank space using our novel \textbf{PowerDP} mechanism.

\subsection{The PowerDP Mechanism}
A standard approach to ensuring DP for low-rank factorization is direct output perturbation: factorizing the weight matrices and subsequently adding Gaussian noise to the resulting low-rank matrices \cite{vogels2019powersgd, yu2021large}. However, in restricted low-rank subspaces, this additive noise frequently overwhelms the latent signal. Because the low-rank matrices are sensitive to perturbations, standard output perturbation inevitably leads to degradation in the model's learning capability \cite{li2022large, sun2024improving, tran2025privacy}.

Instead of post-hoc perturbation, PowerDP (Algorithm \ref{sec:method:powerdp}) is designed as an in-processing perturbation mechanism. It is built upon the foundation of simultaneous subspace iteration, also known as power iteration \cite{stewart1981simultaneous}, an efficient and effective method for approximating dominant singular vectors. PowerDP iteratively refines the low-rank approximation but uniquely injects the calibrated DP noise directly into the final projection steps.

PowerDP takes as input a full-rank target matrix $W$ with a known bounded norm $C_W$, alongside the target rank $r$, and privacy parameter $\sigma$. 
\begin{enumerate}
    \item \textbf{Subspace Iteration (Lines 2--8):} The algorithm initializes a random projection matrix $Q$. Over $k$ iterations, it alternates between projecting $W$ onto the current row space ($P = W Q^T$) and column space ($A = P^T W$). The periodic orthonormalization of columns and rows prevents numerical instability and ensures the algorithm converges to the dominant singular subspaces of $W$.
    \item \textbf{In-Processing Noise Injection (Lines 9--10):} Rather than returning the exact singular vectors, PowerDP applies Gaussian noise $\mathcal{N}(0, \sigma^2 C_W^2)$ to the final un-normalized projection matrices $\tilde{B}$ and $\tilde{A}$. The variance of this noise is calibrated to the norm bound $C_W$ of the input matrix, ensuring tight sensitivity and achieving the DP guarantees.
    \item \textbf{Structural Preservation (Line 11):} The final orthonormalization of the row-space matrix $\tilde{A}$ occurs after the noise injection. By applying Gram-Schmidt post-perturbation, the mechanism ensures that the resulting matrices remain orthogonal with norm bounded. This mitigates the effect of DP noise observed in traditional output perturbation methods, as the noise is absorbed into the subspace orientation rather than destroying the magnitude of the singular values.
\end{enumerate}

\begin{algorithm}[t]
    \caption{PowerDP}
    \label{sec:method:powerdp}
    \begin{algorithmic}[1]
    \STATE {\textbf{Input:}} matrix $W \in \mathbb{R}^{m \times n}$ (norm bounded by $C_W$), rank $r \ll \min (m, n)$, number of iterations $k$, noise std $\sigma$.
    \STATE Initialize $Q \in \mathbb{R}^{r \times n}$ from Gaussian distribution.
    \FOR {$k$ iterations}
        \STATE $P \gets W Q^T$.
        \STATE $P \gets$ Orthonormalize columns of $P$.
        \STATE $A \gets P^T W$.
        \STATE $Q \gets$ Orthonormalize rows of $A$.
    \ENDFOR
    \STATE $\tilde B \gets W Q^T + \mathcal{N}(0, \sigma^2 C_W^2)$.
    \STATE $\tilde A \gets A + \mathcal{N}(0, \sigma^2 C_W^2)$.
    \STATE Orthonormalize rows of $\tilde A$.
    \RETURN $\tilde A, \tilde B$.
\end{algorithmic}
\end{algorithm}

\subsection{Federated Learning with PowerDP}

Algorithm~\ref{sec:method:fedpower} details our FL framework, \textbf{FedPower}, which achieves a global DP guarantee by integrating PowerDP into the global aggregation phase. While the client-side local training process follows a standard FL protocol (as described in Algorithm~\ref{alg:fedlora}), the crucial difference lies in how the server accurately aggregates and privately refactorizes the LoRA modules.

The FedPower training lifecycle consists of client-side local updates and a privacy-preserving server-side aggregation step.
\begin{enumerate}
    \item \textbf{Client-Side Local Training (Lines 6--16):} At global round $t$, a sampled subset of clients $\mathcal{C}^t$ downloads the global low-rank modules $A^{t-1}$ and $B^{t-1}$. Each client performs $L$ steps of mini-batch stochastic gradient descent on their local dataset. The gradients $\nabla_a \mathcal{L}_i^{t,l}$ and $\nabla_b \mathcal{L}_i^{t,l}$ are used to independently update the local modules $A_i^{t,l}$ and $B_i^{t,l}$.
    \item \textbf{Server-Side Full-Rank Reconstruction (Line 18):} The server explicitly reconstructs the full-rank weight update $\Delta W^t_i = B_i^{t,L} A_i^{t,L}$ for each client.
    \item \textbf{Clipping and Aggregation (Lines 19--20):} Once in the full-rank space, the server clips the matrices $\Delta W_i^t$ using a clipping threshold $C$. This guarantees that the sensitivity of any single client's update is bounded exactly by $C$. The server then computes the average of these clipped full-rank matrices to form the aggregated global update $\Delta W^t$. 
    \item \textbf{Private Refactorization (Line 21):} Finally, the server must project the aggregated full-rank matrix $\Delta W^t$ back down to the target rank $r$ to be transmitted for the next global round. The server passes $\Delta W^t$, the rank $r$, and the clipping threshold $C$ into the \textbf{PowerDP} subroutine. PowerDP returns the newly privatized LoRA modules $A^t$ and $B^t$ for the next global round.
\end{enumerate}

By decoupling the optimization space (which is low-rank on the clients for communication and memory efficiency) from the privacy-bounding space (which is full-rank on the server for exact sensitivity tracking), FedPower simultaneously preserves the efficiency of LoRA and the mathematical guarantees of Differential Privacy, all without sacrificing model accuracy as we show in Section~\ref{sec:exp}.

\begin{algorithm}[t]
    \caption{FedPower}
    \label{sec:method:fedpower}
    \begin{algorithmic}[1]
    \STATE {\textbf{Input:}} number of global rounds $T$, number of local rounds $L$, global weight $W^0 \in \mathbf{R}^{m \times n}$, rank $r$, learning rate $\eta$, noise scale $\sigma$, clipping threshold $C$, number of power iterations $k$.
    \STATE $A^0 \sim \mathcal{N}(0, \sigma^2)^{r \times n}$.
    \STATE $B^0 \gets \mathbf{0}_{m \times r}$.
    \FOR {global round $t \gets 1 \ldots T$}
        \STATE Server subsamples client set $\mathcal{C}^t$ with sampling rate $q_c$.
        \FOR {client $i \in \mathcal{C}$ in parallel}
            \STATE $A^{t,0}_i \gets A^{t-1}$
            \STATE $B^{t,0}_i \gets B^{t-1}$.
            \FOR {local round $l \gets 1 \ldots L$}
                \STATE Client subsamples batch $\mathcal{B}^{t,l}$ with sampling rate $q_s$.
                \STATE $\mathcal{L}_i^{t,l} \gets \frac{1}{| \mathcal{B}^{t,l} |} \sum_{(x,y)\in \mathcal{B}^{t,l}} \ell(A^{t,l-1}_i, B^{t,l-1}_i, x, y)$.
                \STATE $A^{t,l}_i \gets A^{t,l-1}_i - \eta \nabla_a \mathcal{L}_i^{t,l}$
                \STATE $B^{t,l}_i \gets B_i^{t,l-1} - \eta \nabla_b \mathcal{L}^{t,l}_i$.
            \ENDFOR
            \STATE Clients send $A^{t,L}_i, B^{t,L}_i$ to server.
        \ENDFOR
        \STATE \textcolor{blue}{// Server aggregates and privatizes via PowerDP}
        \STATE Merge $\displaystyle \Delta W^t_i \gets B_i^{t,L} A_i^{t,L}$.
        \STATE Clip $\Delta W_i^t$ by $C$.
        \STATE Aggregate $\displaystyle \Delta W^t \gets \frac{1}{|\mathcal{C}|} \sum_{i\in \mathcal{C}} W_i^t$.
        \STATE Refactorize $\displaystyle A^t, B^t \gets \textbf{PowerDP}(W^t, r, k, \sigma)$.
    \ENDFOR
    \STATE {\textbf{Output:}} $W^T$.
\end{algorithmic}
\end{algorithm}

\section{Privacy Analysis}
\label{sec:analysis}

To formally establish the privacy guarantees of FedPower, we proceed in three steps. First, we rigorously bound the $\ell_2$-sensitivity of the matrix projections within the PowerDP subroutine. Second, we formalize the privacy guarantee of a single PowerDP application. Finally, using moment accountant and privacy amplification by subsampling, we provide the privacy guarantees of the FedPower framework at both the sample and client levels.

\subsection{Sensitivity of Subspace Projections}

To calibrate the Gaussian mechanism within PowerDP, we must ensure that the iterative projections do not inflate the sensitivity of the aggregated weight matrices. Lemma~\ref{lemma} establishes that projecting a bounded matrix onto an orthonormal subspace preserves its Frobenius norm bound. This property is the cornerstone of PowerDP's privacy guarantee, as it allows us to safely inject noise into the un-normalized projection steps.

\begin{lemma}[\textbf{Subspace Projection Sensitivity}]\label{lemma}
Let $W\in \mathbb{R}^{m \times n}$ be a matrix with a Frobenius norm bounded by $C_W$ (i.e., $\|W\|_F \leq C_W$). For any projection matrix $P$ with orthonormal columns ($P \in \mathbb{R}^{m \times r}$) or orthonormal rows ($P \in \mathbb{R}^{r \times n}$), the Frobenius norm of the projection $P^TW$ (or $WP^T$) is bounded by $C_W$.
\end{lemma}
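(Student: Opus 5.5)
The plan is to reduce the claim to the fact that a matrix with orthonormal columns or rows is a contraction in operator norm, combined with the submultiplicativity estimate $\|XY\|_F \le \|X\|_2\|Y\|_F$. I treat the two cases separately, since the dimensions in the statement are written loosely. For $P\in\mathbb{R}^{m\times r}$ with orthonormal columns, $P^TP=I_r$, and $P^TW\in\mathbb{R}^{r\times n}$ is the relevant product. For $P\in\mathbb{R}^{r\times n}$ with orthonormal rows, $PP^T=I_r$, and $WP^T\in\mathbb{R}^{m\times r}$ is the relevant product. These are exactly the objects $P^TW$ and $WQ^T$ that appear in Algorithm~\ref{sec:method:powerdp}.

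For the first case, I write $\|P^TW\|_F^2=\operatorname{tr}(W^TPP^TW)$. I then note that $\Pi=PP^T$ is an orthogonal projector: it is symmetric, and $\Pi^2=P(P^TP)P^T=\Pi$. Hence $0\preceq \Pi\preceq I_m$. Column by column, with $w_j$ the $j$-th column of $W$, this gives $\|P^Tw_j\|_2^2=w_j^T\Pi w_j\le \|w_j\|_2^2$. Summing over $j$ yields $\|P^TW\|_F^2\le\|W\|_F^2\le C_W^2$.

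The second case is symmetric. I write $\|WP^T\|_F^2=\operatorname{tr}(WP^TPW^T)$, where $P^TP\in\mathbb{R}^{n\times n}$ is again an orthogonal projector. Applying the same bound row by row, with $u_i$ the $i$-th row of $W$, gives $\|Pu_i^T\|_2^2\le\|u_i\|_2^2$. Summing over rows gives the bound $C_W$. Alternatively, I can transpose, using $\|WP^T\|_F=\|PW^T\|_F$, and invoke the first case with $P^T$, which has orthonormal columns.

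There is no real obstacle. The only point needing care is the orientation of the multiplication, since multiplying by a matrix with orthonormal rows on the wrong side would not be norm-preserving. So I will explicitly match each case to the product that actually appears in PowerDP. I will also remark that the bound does not depend on $P$ being a fixed matrix. This matters because in PowerDP the matrices $P$ and $Q$ are themselves computed from $W$, and the lemma only gives a norm bound, not a sensitivity bound by itself. The difference of outputs on adjacent inputs will be handled later in the paper.
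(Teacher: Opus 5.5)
Your proof is correct and is essentially the paper's argument. Both write the squared Frobenius norm as a trace and use that $PP^T$ (orthonormal columns) or $P^TP$ (orthonormal rows) is an orthogonal projector and hence $\preceq I$; your column-by-column or row-by-row quadratic-form bound is just the expanded form of the paper's trace inequality against the positive semidefinite matrix $WW^T$ or $W^TW$.

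One correction to your aside about a \emph{wrong side}: orientation matters only for the product to be defined. Any matrix with orthonormal rows or columns has spectral norm $1$, so it is Frobenius-non-expansive on either side, which is exactly what the operator-norm route you announced at the outset gives.
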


\begin{proof}
We use the trace property of the Frobenius norm: $\| M \|_F^2 = \operatorname{Tr} (M^T M)$. Furthermore, by the cyclic property of the trace operator, $\operatorname{Tr}(AB) = \operatorname{Tr}(BA)$.

\textbf{Case 1: $P$ has orthonormal rows.} 
Since $P \in \mathbb{R}^{r \times n}$ has orthonormal rows, $PP^T = I_r$. We want to bound
\begin{align}
    \| WP^T \|_F^2 &= \operatorname{Tr} \left( (WP^T)^T (WP^T) \right) \\
    &= \operatorname{Tr} \left( P W^T W P^T \right) \\
    &= \operatorname{Tr} \left( W^T W P^T P \right).
\end{align}
Because $P P^T = I_r$, the matrix $P^T P \in \mathbb{R}^{n \times n}$ is an orthogonal projection matrix onto the row space of $P$. Its eigenvalues are exactly $1$ (with multiplicity $r$) and $0$. Therefore, $P^T P \preceq I_n$ under the Loewner partial order. Because $W^TW$ is positive semi-definite, we have
\begin{align}
    \operatorname{Tr} \left( W^T W P^T P \right) &\leq \operatorname{Tr} \left( W^T W I_n \right) \\
    &= \| W \|_F^2 \leq C_W^2.
\end{align}
Taking the square root yields $\|WP^T\|_F \leq C_W$.

\textbf{Case 2: $P$ has orthonormal columns.} 
Since $P \in \mathbb{R}^{m \times r}$ has orthonormal columns, $P^T P = I_r$. We want to bound
\begin{align}
    \| P^TW \|_F^2 &= \operatorname{Tr} \left( (P^TW)^T (P^TW) \right) \\
    &= \operatorname{Tr} \left( W^T P P^T W \right) \\
    &= \operatorname{Tr} \left( P P^T W W^T \right).
\end{align}
Similarly, $P P^T \in \mathbb{R}^{m \times m}$ is an orthogonal projection matrix, so $P P^T \preceq I_m$. Since $WW^T$ is positive semi-definite, it follows that
\begin{align}
    \operatorname{Tr} \left( P P^T W W^T \right) &\leq \operatorname{Tr} \left( I_m W W^T \right) \\
    &= \| W \|_F^2 \leq C_W^2.
\end{align}
Taking the square root yields $\|P^TW\|_F \leq C_W$. This concludes the proof.
\end{proof}

\subsection{Privacy Guarantee of PowerDP}

Building upon the sensitivity bound established in Lemma~\ref{lemma}, we formulate the privacy guarantee of a single application of the PowerDP mechanism.

\begin{theorem}[\textbf{PowerDP Privacy}] \label{theorem:powerdp} 
For any target failure probability $\delta_0 > 0$, the PowerDP mechanism (Algorithm~\ref{sec:method:powerdp}) satisfies $(\epsilon_0, \delta_0)$-DP if the noise standard deviation $\sigma$ satisfies
\begin{align}
    \sigma \geq \frac{\sqrt{2 \log(1.25/\delta_0)}}{\epsilon_0}.
\end{align}
\end{theorem}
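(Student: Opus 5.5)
The plan is to reduce PowerDP to a single application of the classical Gaussian mechanism followed by post-processing (Proposition~\ref{theorem:post:processing}). The only data-dependent quantities released are the two noisy matrices $\tilde B = WQ^T + Z_B$ and $\tilde A = A + Z_A$. The final row orthonormalization of $\tilde A$ is a data-independent map of the noisy output. The random initialization of $Q$ is independent of the data.

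\textbf{Step 1: fix the subspace.} I would condition on the internal subspace-iteration state and treat the projection matrices $P$ and $Q$ from the last iteration as fixed orthonormal matrices. Then $WQ^T$ and $A = P^TW$ are linear in $W$. By Lemma~\ref{lemma}, each has Frobenius norm at most $C_W$ whenever $\|W\|_F \le C_W$. For adjacent inputs $W, W'$ (both in the clipped ball), the standard convention used in the paper is that the sensitivity of a quantity whose norm is bounded by $C_W$ is at most $C_W$. With the sensitivity normalized to $C_W$, the classical Gaussian mechanism theorem applies: adding $\mathcal N(0,\sigma^2 C_W^2)$ with $\sigma \ge \sqrt{2\log(1.25/\delta_0)}/\epsilon_0$ gives $(\epsilon_0,\delta_0)$-DP for each released projection.

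\textbf{Step 2: joint release and post-processing.} I would treat $\tilde B$ and $\tilde A$ as one Gaussian release. Either I bound the joint sensitivity of the pair, or I argue as the paper intends that each is calibrated to $C_W$ and is treated as a component of a single mechanism. Then Proposition~\ref{theorem:post:processing} covers the final orthonormalization of $\tilde A$ and any downstream use of the outputs.

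\textbf{Main obstacle.} The hardest step is Step 1's treatment of $P$ and $Q$. These matrices are themselves computed from $W$ without noise, so they differ between $W$ and $W'$. Strictly, the sensitivity of $W\mapsto WQ(W)^T$ is not controlled by Lemma~\ref{lemma} alone, because Lemma~\ref{lemma} only bounds the norm of each output.

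My first attempt would use the crude bound $\|WQ^T - W'Q'^T\|_F \le \|WQ^T\|_F + \|W'Q'^T\|_F \le 2C_W$. This only costs a constant factor in $\sigma$, and it can be absorbed by redefining the clipping bound $C_W$ (clipping at $C_W/2$). The same applies to $A$ and to the pair, where the joint norm is at most $\sqrt 2$ times larger.

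I would state explicitly which normalization of $C_W$ makes the displayed $\sigma$ exact. The remaining argument is then just the Gaussian mechanism followed by post-processing.
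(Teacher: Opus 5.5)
You follow the paper's route exactly: Lemma~\ref{lemma} as the sensitivity bound, one Gaussian mechanism on the pre-noise $(\tilde B,\tilde A)$, and Proposition~\ref{theorem:post:processing} for the final orthonormalization. The paper gets the displayed constant only by asserting that Lemma~\ref{lemma} bounds the $\ell_2$-sensitivity by $C_W$. It never addresses the norm-versus-difference issue, the dependence of $P,Q$ on $W$, or the joint release of two matrices. So the obstacle you flag is real and is not resolved in the paper. The gap is that your repaired argument proves a rescaled statement, not the displayed one.

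\textbf{Step 1 must be discarded.} $P$ and $Q$ are deterministic functions of $(W,Q_0)$, so you may condition only on the data-independent initialization $Q_0$. That conditioning is legitimate because a mixture, over independent internal randomness, of $(\epsilon_0,\delta_0)$-DP mechanisms is $(\epsilon_0,\delta_0)$-DP. Also, a norm bound is not a sensitivity bound, and for fixed $Q_0$ the factor $2$ is actually attained. Take $W$ of rank $r$ with $\|W\|_F=C_W$ and the input $W'=-W$. With Gram--Schmidt this gives $P'=-P$, $A'=A$ and $Q'=Q$ at every iteration, so $W'Q'^T=-WQ^T$. For almost every $Q_0$, the means of $\tilde B$ then differ by exactly $2\|W\|_F=2C_W$. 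Hence no fixed-$Q_0$ sensitivity argument can give $C_W$.

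\textbf{Step 2's second option also fails.} Two releases each calibrated to $(\epsilon_0,\delta_0)$ are only $(2\epsilon_0,2\delta_0)$-DP by composition. What survives is the joint triangle-inequality bound $2\sqrt2\,C_W$. This gives $(\epsilon_0,\delta_0)$-DP for $\sigma\ge 2\sqrt2\,\sqrt{2\log(1.25/\delta_0)}/\epsilon_0$, or equivalently the displayed bound for inputs with $\|W\|_F\le C_W/(2\sqrt2)$. That contradicts the input contract $\|W\|_F\le C_W$ of Algorithm~\ref{sec:method:powerdp}, so present it as a corrected theorem rather than a renormalization of the same one.

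\textbf{Missing hypothesis $\epsilon_0<1$.} The calibration $\sigma\ge\sqrt{2\log(1.25/\delta_0)}/\epsilon_0$ that you and the paper invoke is the classical Gaussian-mechanism theorem, which is proved only for $\epsilon_0\in(0,1)$. That hypothesis is missing from both the statement and your Step~1. It is not cosmetic. With noise standard deviation $\Delta\sqrt{2\log(1.25/\delta_0)}/\epsilon_0$ for sensitivity $\Delta$, the smallest $\delta$ for which the Gaussian mechanism is $(\epsilon_0,\delta)$-DP tends to $1$ as $\epsilon_0\to\infty$ with $\delta_0$ fixed. Add the restriction $\epsilon_0<1$, or use the exact analytic calibration of the Gaussian mechanism, together with the $2\sqrt2$ rescaling above, to get a correct final statement.
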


\begin{proof}
According to Lemma~\ref{lemma}, the $\ell_2$-sensitivity of the target matrices before noise injection ($\tilde{B}$ and $\tilde{A}$) is bounded by $C_W$. PowerDP injects Gaussian noise $\mathcal{N}(0, \sigma^2 C_W^2)$ into these matrices. By the standard properties of the Gaussian noise mechanism \cite{dworkdp}, adding noise scaled by $\sigma \cdot \Delta f$ (where $\Delta f \leq C_W$ is the $\ell_2$-sensitivity) guarantees $(\epsilon_0, \delta_0)$-DP when
\begin{align}
    \sigma \geq \sqrt{2 \log(1.25/\delta_0)} / \epsilon_0.
\end{align}
Because the final orthonormalization step is applied after the noise injection, it acts as a data-independent post-processing step, which by Proposition~\ref{theorem:post:processing} of DP, maintains the same level of privacy guarantee.
\end{proof}

The theorem above establishes the privacy guarantee of our framework by quantifying the exact privacy leakage of a single application of the PowerDP mechanism. By calibrating the standard deviation $\sigma$ of the injected Gaussian noise to the sensitivity of the query, we can achieve a rigorous $(\epsilon_0, \delta_0)$-DP guarantee per step. This base result is necessary for the privacy analysis of FedPower.

\subsection{Privacy Guarantee of FedPower}

In the context of FL, privacy guarantees can be framed at two distinct granularities: \textit{sample-level} DP (protecting the inclusion of a single data sample of a client) and \textit{client-level} DP (protecting the inclusion of all data samples belonging to a specific client). We show that FedPower can accommodate both paradigms. 

\begin{theorem}[\textbf{Sample-Level DP of FedPower}] \label{theorem:fedpower_sample} 
Let $T$ be the total number of global rounds. There exists an absolute constant $c$ such that for any $\epsilon < c q_c^2q_s^2 T$ and $\delta > 0$, Algorithm~\ref{sec:method:fedpower} satisfies $(\epsilon, \delta)$-DP under sample-level privacy if
\begin{align}
    \sigma \geq c \frac{q_c q_s\sqrt{T \log(1/\delta)}}{\epsilon},
\end{align}
where $q_c$ is the probability of a client being sampled in a global round, and $q_s$ is the probability of a sample being drawn in a local batch.
\end{theorem}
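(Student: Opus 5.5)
The plan is to treat each global round of Algorithm~\ref{sec:method:fedpower} as a single subsampled Gaussian mechanism, and then compose over the $T$ rounds with the moment accountant of Theorem~\ref{theorem:advanced}.

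\textbf{Step 1: the per-round mechanism.} Fix a round $t$. I will argue that, conditioned on the released history $(A^{t-1},B^{t-1})$, the map from the dataset to $\Delta W^t$ has bounded $\ell_2$-sensitivity. Take two sample-adjacent datasets. They differ in one sample held by one client $i$, so only $\Delta W_i^t$ can change. Because the server clips each $\Delta W_i^t$ to Frobenius norm $C$, we get $\|\Delta W_i^t - \Delta W_i'^t\|_F \le 2C$ for replacement, or $\le C$ for add/remove. After averaging over $|\mathcal{C}^t|$ clients, the sensitivity of $\Delta W^t$ is at most a constant times $C$, and the constant is absorbed into $c$. Lemma~\ref{lemma} then shows that the projections $\Delta W^t Q^T$ and $P^T\Delta W^t$ inside PowerDP have sensitivity at most a constant times $C_W=C$. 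By Theorem~\ref{theorem:powerdp}, the release of $(\tilde A,\tilde B)$ is a Gaussian mechanism with noise multiplier $\sigma$. The final orthonormalization is post-processing by Proposition~\ref{theorem:post:processing}.

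\textbf{Step 2: the subsampling rate.} Clients are sampled with probability $q_c$, and within a client a given sample enters a local batch with probability $q_s$. A fixed sample therefore influences round $t$ only if its client is selected and the sample is drawn. Under the Poisson sampling model, and treating the $L$ local steps as one use of the sample per round, I take the effective sampling rate to be $q=q_cq_s$.

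\textbf{Step 3: composition.} Apply Theorem~\ref{theorem:advanced} with $q=q_cq_s$ over $T$ iterations. This gives $(\epsilon,\delta)$-DP for $\epsilon< c_1q_c^2q_s^2T$ whenever $\sigma\ge c_2 q_cq_s\sqrt{T\log(1/\delta)}/\epsilon$. Setting $c=\max(c_1,c_2)$, or renaming constants, gives the statement. Adaptivity across rounds is allowed by the moment accountant, since it composes adaptively chosen subsampled Gaussian mechanisms. The release of $W^T$ is post-processing of the sequence $(A^t,B^t)_{t\le T}$.

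\textbf{Main obstacle.} The hard part is justifying Step 2. The sample may be used in several of the $L$ local steps, so whether it is included is not exactly a single Bernoulli$(q_cq_s)$ event. Also, PowerDP adds two noisy releases ($\tilde A$ and $\tilde B$) per round, not one. I would handle the two releases by composing them inside the round, which only changes the constant. For the multiple local steps, my first attempt is to model a round's inclusion as Bernoulli with rate at most $q_cq_s$ by conditioning on a single draw. If that fails, I would fall back on rate $q_c\min(1,Lq_s)$ via a union bound and absorb $L$ into the constant $c$.
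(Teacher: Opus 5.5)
Your plan follows the paper's proof. Clipping bounds each client's contribution by $C$, and PowerDP is treated as a Gaussian mechanism calibrated to $C$ via Lemma~\ref{lemma} and Theorem~\ref{theorem:powerdp}. A sample's per-round inclusion probability is taken to be $q_cq_s$, and Theorem~\ref{theorem:advanced} is applied over $T$ rounds. The paper handles what you call the main obstacle by simply asserting that ``the inclusion probability of any single data sample in a given round is $q_c q_s$''. It never accounts for the $L$ local steps, so you are not missing an idea the paper supplies. The obstacle is real, however, and neither of your remedies removes it.

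Conditioning on a single draw cannot work. Sample $x$ influences round $t$ as soon as its client is selected and $x$ lands in \emph{any} of the $L$ independently drawn batches. The per-round inclusion probability is therefore $q_c\bigl(1-(1-q_s)^L\bigr)$, which is $\Theta(Lq_cq_s)$ when $Lq_s\le 1$. Your fallback gets this order right, but $L$ is an algorithm parameter, not an absolute constant. The resulting extra factor $L$ in the required $\sigma$ cannot be hidden in $c$. At best, your argument (and the paper's) gives the stated bound for $L=1$, or for general $L$ with $q_s$ replaced by $1-(1-q_s)^L$.

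There are two further points, both also skipped by the paper.

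First, even for $L=1$, ``rate $q_cq_s$'' is not literal Poisson sampling. Sample $x$ shares the client coin with the rest of client $i$'s data, so conditioned on $x$ being unused, the output is not distributed as it would be on $D'$. Theorem~\ref{theorem:advanced} therefore does not apply verbatim. Take a generic mechanism in which a client's mere participation moves the output by many noise standard deviations. There the R\'enyi divergence of order $\alpha$ can be of order $q_cq_s^2\alpha/\sigma^2$ rather than $q_c^2q_s^2\alpha/\sigma^2$. In FedPower the product rate is plausibly recoverable up to constants when $\sigma$ is bounded below. You would decompose on $x$'s own coin, which is independent of everything else. You would then use that the entire pre-noise PowerDP output has norm at most $\sqrt{2}\,C$, against noise of standard deviation $\sigma C$. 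This keeps the client-selected and client-unselected branches within $O(1/\sigma)$ noise standard deviations of each other. This has to be argued, though.

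Second, the sensitivity bookkeeping in Step~1 is loose, though only up to constants. Lemma~\ref{lemma} bounds norms, not sensitivities. Because $P$ and $Q$ are computed from the data, the sensitivity of the pre-noise outputs follows only from the triangle inequality and is $2C$. This gains nothing from the $1/|\mathcal{C}^t|$ averaging. Your bound of $C$ for add/remove should likewise be $2C$, since removing a sample does not remove client $i$'s clipped update. Also, $(\tilde A,\tilde B)$ should be accounted as a single Gaussian release with joint sensitivity at most $2\sqrt{2}\,C$ before amplification, since both use the same subsample.
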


\begin{proof}
In FedPower, the global clipping operation ensures that the sensitivity of any single client's full-rank update $\Delta W_i^t$ is bounded by $C$. PowerDP subsequently adds Gaussian noise calibrated to this sensitivity limit $C$. 
For a single training step without subsampling, achieving $(\epsilon, \delta)$-DP requires
\begin{align}
    \sigma \geq \sqrt{2\log(1.25 / \delta)} / \epsilon .
\end{align}
Over $T$ sequential applications, Theorem~\ref{theorem:advanced} states that the privacy loss accumulates at a rate of $\mathcal{O}(\sqrt{T})$. Therefore, the corresponding noise multiplier needs to be scaled by
\begin{align}
    \sigma \geq \sqrt{2T\log(1.25 / \delta)} / \epsilon .
\end{align}

FedPower features a double-sampling mechanism: clients are sampled with rate $q_c$, and data points within clients are sampled with rate $q_s$. The inclusion probability of any single data sample in a given round is $q_c q_s$. Applying the privacy amplification by subsampling, the privacy cost per step is heavily discounted. Therefore, integrating Theorem~\ref{theorem:advanced} with subsampling rate $q_c q_s$ over $T$ iterations yields the final required noise scale
\begin{align}
    \displaystyle \sigma = \Omega\left(\frac{q_c q_s \sqrt{T \log(1/\delta)}}{\epsilon}\right) .
\end{align}
\end{proof}

Building upon the single-step guarantee, Theorem~\ref{theorem:fedpower_sample} provides the overall privacy bound of FedPower under a sample-level threat model.

Sample-level DP is the gold standard in cross-silo FL, which is the setting that we focus on. Other cross-device FL settings with massive number of clients often requires protecting the participation of entire devices. Supporting these scenarios, we further provide the theoretical results on client-level DP of FedPower below.

\begin{theorem}[\textbf{Client-Level DP of FedPower}]\label{theorem:fedpower_client}
Let $T$ be the number of global rounds. There exists an absolute constant $c$ such that for any $\epsilon < c q_c^2 T$ and $\delta > 0$, Algorithm~\ref{sec:method:fedpower} satisfies $(\epsilon, \delta)$-global DP under client-level privacy if
\begin{align}
    \sigma \geq c \frac{q_c \sqrt{T \log(1/\delta)}}{\epsilon}.
\end{align}
\end{theorem}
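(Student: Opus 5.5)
The proof follows the template of Theorem~\ref{theorem:fedpower_sample}, with the adjacency relation changed from sample-adjacent to client-adjacent datasets. The first step is to bound the per-round sensitivity of the server's aggregate with respect to one whole client. Adding or removing client $i$ changes only the term $\Delta W_i^t$ in the average. After the server-side clipping in Line 19 of Algorithm~\ref{sec:method:fedpower}, $\|\Delta W_i^t\|_F \le C$, no matter how many local samples client $i$ holds or how they shaped its $L$ local steps. So the Frobenius-norm change of the summed update is at most $C$. If one is careful about normalizing by $|\mathcal{C}|$, I would use the standard device of dividing by the expected cohort size $q_c N$ instead of the realized size. This keeps the sensitivity of the average at $C/(q_cN)$, and the noise is scaled correspondingly.

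The second step is to pass this sensitivity through PowerDP. By Lemma~\ref{lemma}, the projections $\Delta W^t Q^T$ and $P^T \Delta W^t$ that receive noise in Lines 9--10 inherit the bound $C_W = C$. Then Theorem~\ref{theorem:powerdp} shows that one global round is $(\epsilon_0,\delta_0)$-DP with respect to client-adjacent inputs whenever $\sigma \ge \sqrt{2\log(1.25/\delta_0)}/\epsilon_0$. The final orthonormalization, the broadcast of $A^t,B^t$, and all later local training on other clients are post-processing by Proposition~\ref{theorem:post:processing}. The key difference from the sample-level proof is that the client's entire dataset enters only through the clipped $\Delta W_i^t$, so no within-client sampling amplification is available. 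The only randomness that hides a client's participation is client subsampling at rate $q_c$.

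The third step composes over $T$ rounds with Theorem~\ref{theorem:advanced}, using subsampling rate $q=q_c$. This gives constants for which $\epsilon < c\,q_c^2T$ and $\sigma \ge c\, q_c\sqrt{T\log(1/\delta)}/\epsilon$ imply $(\epsilon,\delta)$-client-level DP. I will take $c$ as the larger or smaller of $c_1,c_2$, as needed, to match the statement's single constant.

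The main obstacle is justifying the use of the moments accountant. Theorem~\ref{theorem:advanced} is stated for a subsampled Gaussian mechanism with one noise draw per step, but PowerDP releases two noisy matrices per round, $\tilde B$ and $\tilde A$, both computed from $\Delta W^t$. I would handle this by viewing the pair as a single Gaussian query on the stacked vector $(\Delta W^tQ^T, P^T\Delta W^t)$. Its joint sensitivity is at most $\sqrt{2}\,C$ by Lemma~\ref{lemma} applied to each block. The factor $\sqrt{2}$ is then absorbed into the absolute constant $c$. A secondary subtlety is that $P$ and $Q$ are data-dependent. I would argue that for neighboring datasets the mechanism can be analyzed at worst case over the projection, since Lemma~\ref{lemma} bounds the norm for any orthonormal $P$. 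I would present this as the step resting on the paper's sensitivity convention rather than on a fully rigorous treatment of the data-dependent projection.
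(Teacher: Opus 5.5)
Your proposal follows the paper's proof essentially step for step: clipping the merged $\Delta W_i^t$ to norm $C$ bounds a whole client's influence, PowerDP's noise is calibrated to that bound through Lemma~\ref{lemma} and Theorem~\ref{theorem:powerdp}, local batch sampling at rate $q_s$ gives no client-level amplification, and Theorem~\ref{theorem:advanced} is invoked with $q=q_c$ in place of $q_cq_s$. Your extra care about the two noisy releases and the data-dependent $P,Q$ goes beyond the paper's terse argument, and the data-dependence concern can be closed rigorously: Lemma~\ref{lemma} bounds each released block by $C$ for every input of norm at most $C$, so the triangle inequality gives sensitivity at most $2C$ per block ($2\sqrt{2}\,C$ jointly), a constant absorbed into $c$; for the same reason you should keep the realized average with $C_W=C$ rather than the $q_cN$ normalization, since a sensitivity of $C/(q_cN)$ for the average does not carry through a data-dependent projection.
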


\begin{table*}[t]
\caption{Test accuracy on $4$ GLUE datasets under different sample-level privacy budgets. We report average accuracy over 5 runs.} \label{sec:exp:table_acc}
\begin{center}
\begin{tabular}{ | c | c | c c c c c c | }
\hline
Privacy budget & Method & \multicolumn{2}{c}{MNLI} & SST-2 & QQP & QNLI & Average \\
& & Match & Mismatch & & & & \\
\hline\hline
& FedLoRA & 94.95 & 89.60 & 84.21 & 85.48 & 85.69 & 87.99 \\
Non-private & FFA-LoRA & 95.07 & 90.52 & 85.25 & 86.35 & 86.58 & 88.75 \\
& FedPower & \textbf{95.33} & \textbf{92.12} & \textbf{86.19} & \textbf{86.89} & \textbf{86.78} & \textbf{89.46} \\
\hline
& FedLoRA & 93.46 & 87.86 & 83.96 & 82.95 & 83.24 & 86.30 \\
$\epsilon = 9$ & FFA-LoRA & 93.69 & 87.61 & 81.26 & 82.19 & 82.90 & 85.53 \\
& FedPower & \textbf{94.04} & \textbf{89.72} & \textbf{84.16} & \textbf{83.61} & \textbf{84.12} & \textbf{87.13} \\
\hline
& FedLoRA & 93.12 & 86.97 & 83.70 & 82.60 & 82.88 & 85.85 \\
$\epsilon = 6$ & FFA-LoRA & 93.23 & 87.52 & 81.47 & 82.21 & 82.75 & 85.44 \\
& FedPower & \textbf{93.72} & \textbf{88.20} & \textbf{84.01} & \textbf{83.12} & \textbf{83.72} & \textbf{86.55} \\
\hline
& FedLoRA & 91.28 & 83.89 & 79.56 & 81.26 & 81.70 & 83.53 \\
$\epsilon = 3$ & FFA-LoRA & 92.43 & \textbf{87.06} & 80.85 & \textbf{81.94} & 82.60 & 84.98 \\
& FedPower & \textbf{93.03} & 84.99 & \textbf{83.30} & \textbf{81.94} & \textbf{82.61} & \textbf{85.18} \\
\hline
\end{tabular}
\end{center}
\end{table*}

\begin{proof}
The proof of Theorem~\ref{theorem:fedpower_client} parallels the proof of Theorem~\ref{theorem:fedpower_sample}. The main distinction is the unit of privacy. In client-level DP, the adjacent datasets differ by an entire client's local dataset. Because Algorithm~\ref{sec:method:fedpower} clips the \textit{entire} updated weight difference $\Delta W_i^t$ from client $i$ to a maximum norm of $C$, the sensitivity of changing one client is exactly $C$. The privacy amplification in this setting stems solely from the client subsampling rate $q_c$ (as local mini-batch sampling $q_s$ does not hide a client's overall participation). Replacing the combined sampling rate $q_c q_s$ with $q$ in Theorem~\ref{theorem:advanced} yields the result.
\end{proof}

Theorem \ref{theorem:fedpower_client} elevates the privacy guarantee to the client level, accommodating the requirements of cross-device federated learning where a user's entire local dataset must be protected simultaneously. Unlike Theorem \ref{theorem:fedpower_sample}, this bound relies exclusively on client-level subsampling ($q_c$) for privacy amplification, reflecting the stronger threat model where an adversary attempts to infer the presence of an entire device rather than a single sample.

\section{Experimental Evaluation}
\label{sec:exp}

In this section, we conduct a comprehensive empirical study to evaluate the performance of FedPower algorithm and PowerDP mechanism. In Subsection~\ref{sec:exp:setup}, we detail the experimental setup. In Subsection~\ref{sec:exp:accuracy}, we evaluate the effectiveness of FedPower compared with the other two existing baselines
on DP-FL with LoRA. In Subsection~\ref{sec:exp:subsec:powerdp}, we demonstrate the effectiveness of PowerDP compared with naive input and output perturbation. In Subsection~\ref{sec:exp:subsec:frequency}, we study the effect of refactorization frequency on the model accuracy and discuss corresponding tradeoff in accuracy and computational cost. In Subsection~\ref{sec:exp:subsec:mia}, we empirically validate the data privacy guarantee of FedPower via membership inference attacks.

\subsection{Experimental Setup}
\label{sec:exp:setup}

We conduct experiments to evaluate the performance of FedPower with language understanding tasks on the RoBERTa large model \cite{liu2019roberta}. We use four datasets from the GLUE benchmark \cite{wang2018glue} and evenly split each dataset across $6$ clients.
\begin{itemize}
    \item Multi-Genre Natural Language Inference (MNLI) \cite{williams2018broad}: a sentence-sentence pair dataset for classification task with three labels. The task determines if two given sentences are supporting, neutral, or contradicting. The dataset can be evaluated on matched (in-domain) and mismatched (out-domain) test subsets.
    \item Quora Question Pairs (QQP) \cite{sharma2019natural}: a question-question pair dataset for paraphrasing classification. The task detects if two given questions are duplicate or not.
    \item Question Natural Language Inference (QNLI) \cite{wang2018glue}: a question-sentence pair dataset for entailment classification task. The task determines whether a sentence answers a given question. 
    \item Stanford Sentiment Treebank v2 (SST-2) \cite{socher2013recursive}: a sentence dataset for sentiment classification. The task determines if a given sentence is positive or negative.
\end{itemize}

We apply LoRA with rank $r=8$ to the query and value projections with dropout rate of $0.05$. All other parameters, including non-LoRA weights and classification heads are frozen during training. In order to make a fair comparison, we keep the same batch size $b=128$ and total global rounds $T=200$ with client sample rate set to $0.5$ for all experiments. We use the Adam optimizer with learning rate $\eta=0.05$ for non-private settings and $\eta=0.5$ for private settings.

In all experiments, we focus on sample-level DP for a cross-silo FL setting consisting of $6$ clients. We consider three privacy budgets $\epsilon=\{9,6,3\}$ with $\delta=10^{-5}$ and clipping threshold $C=2$.

\begin{figure*}[t]
\centering
\begin{subfigure}{0.244\textwidth}
\centering
  \includegraphics[width=\textwidth]{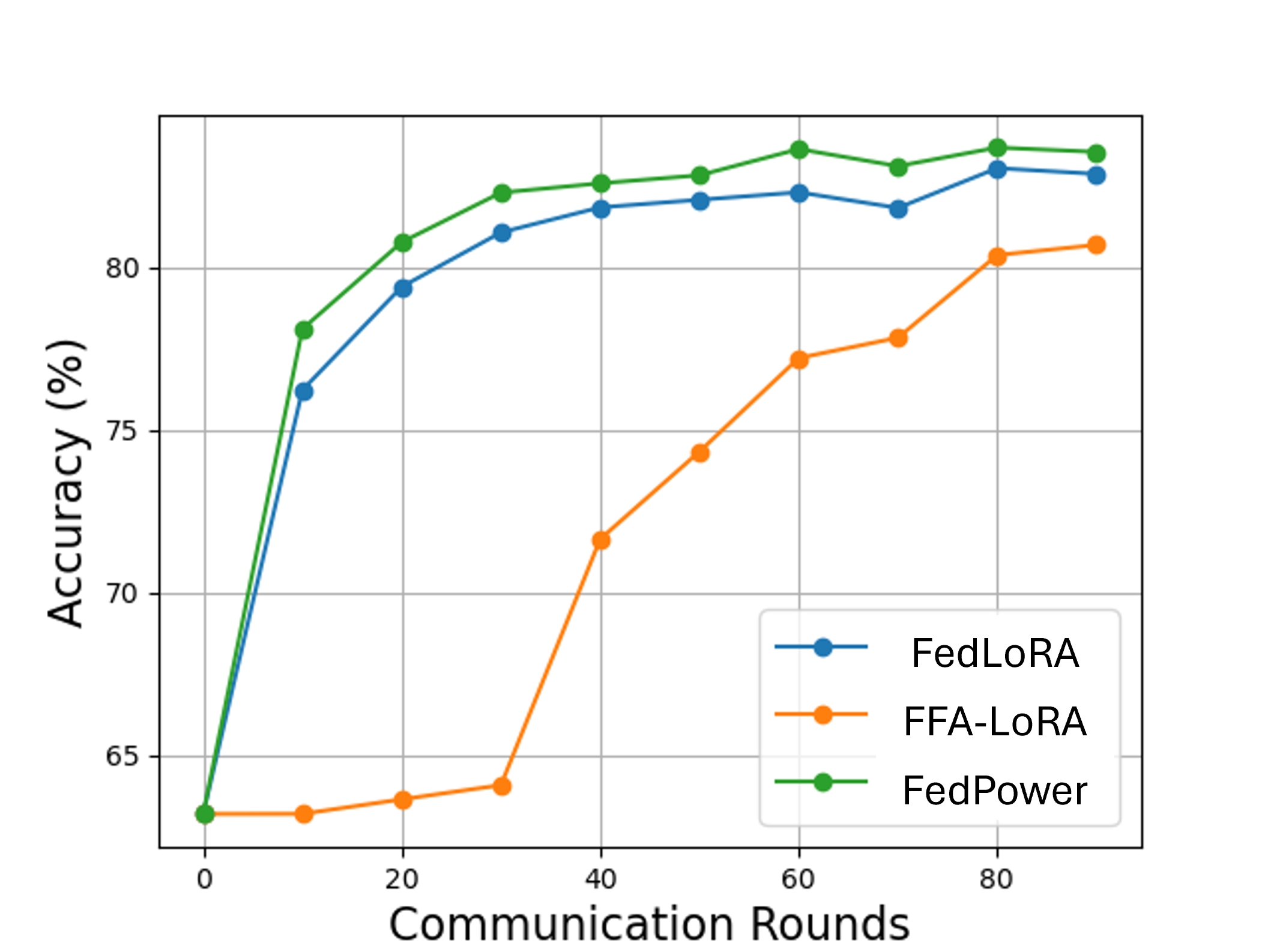}
  \caption{Accuracy vs. rounds ($\epsilon = 6$).}
  \label{sec:exp:round:a}
\end{subfigure}
\begin{subfigure}{0.244\textwidth}
\centering
  \includegraphics[width=\textwidth]{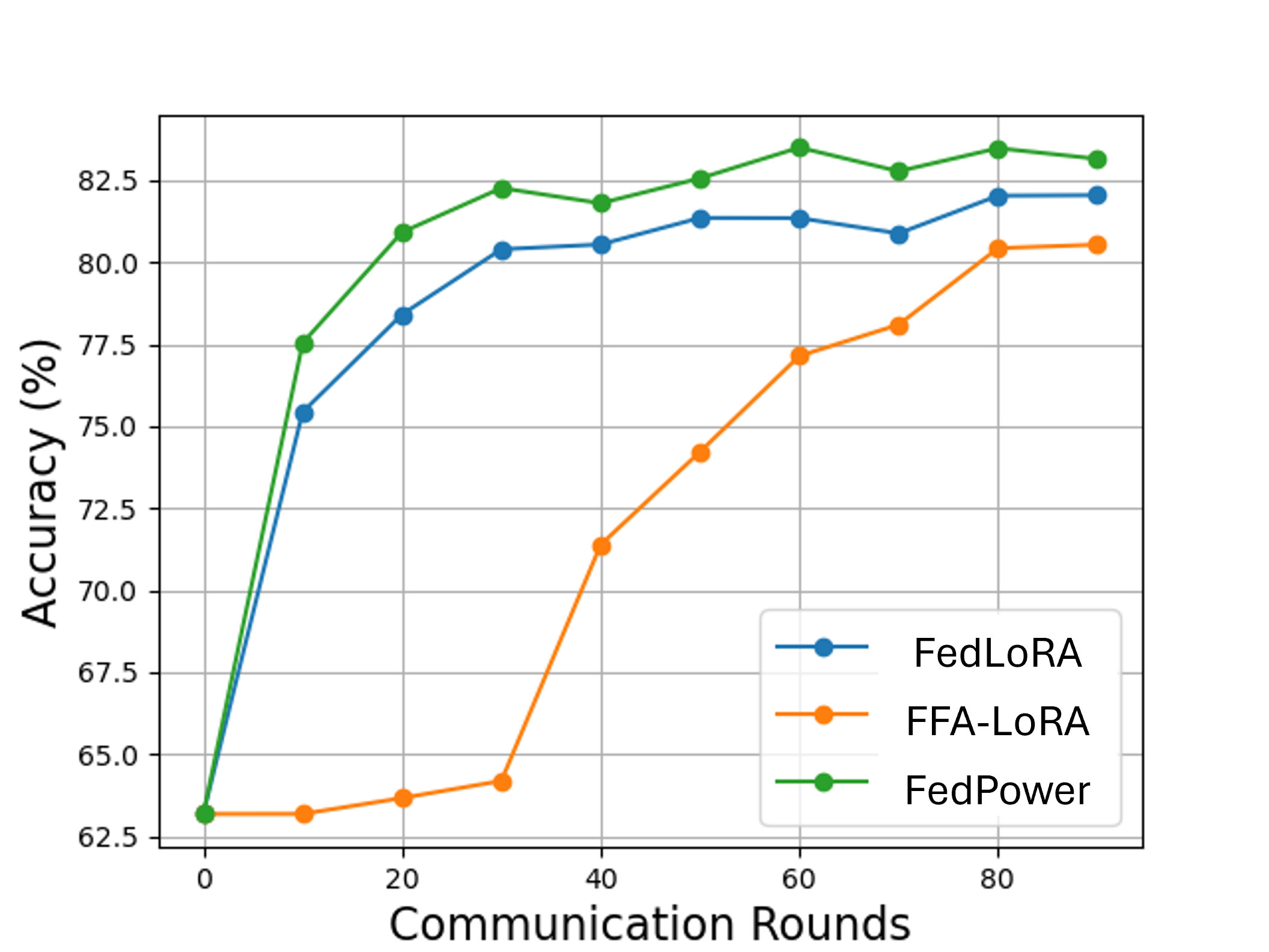}
  \caption{Accuracy vs. rounds ($\epsilon = 3$).}
  \label{sec:exp:round:b}
\end{subfigure}
\begin{subfigure}{0.244\textwidth}
\centering
  \includegraphics[width=\textwidth]{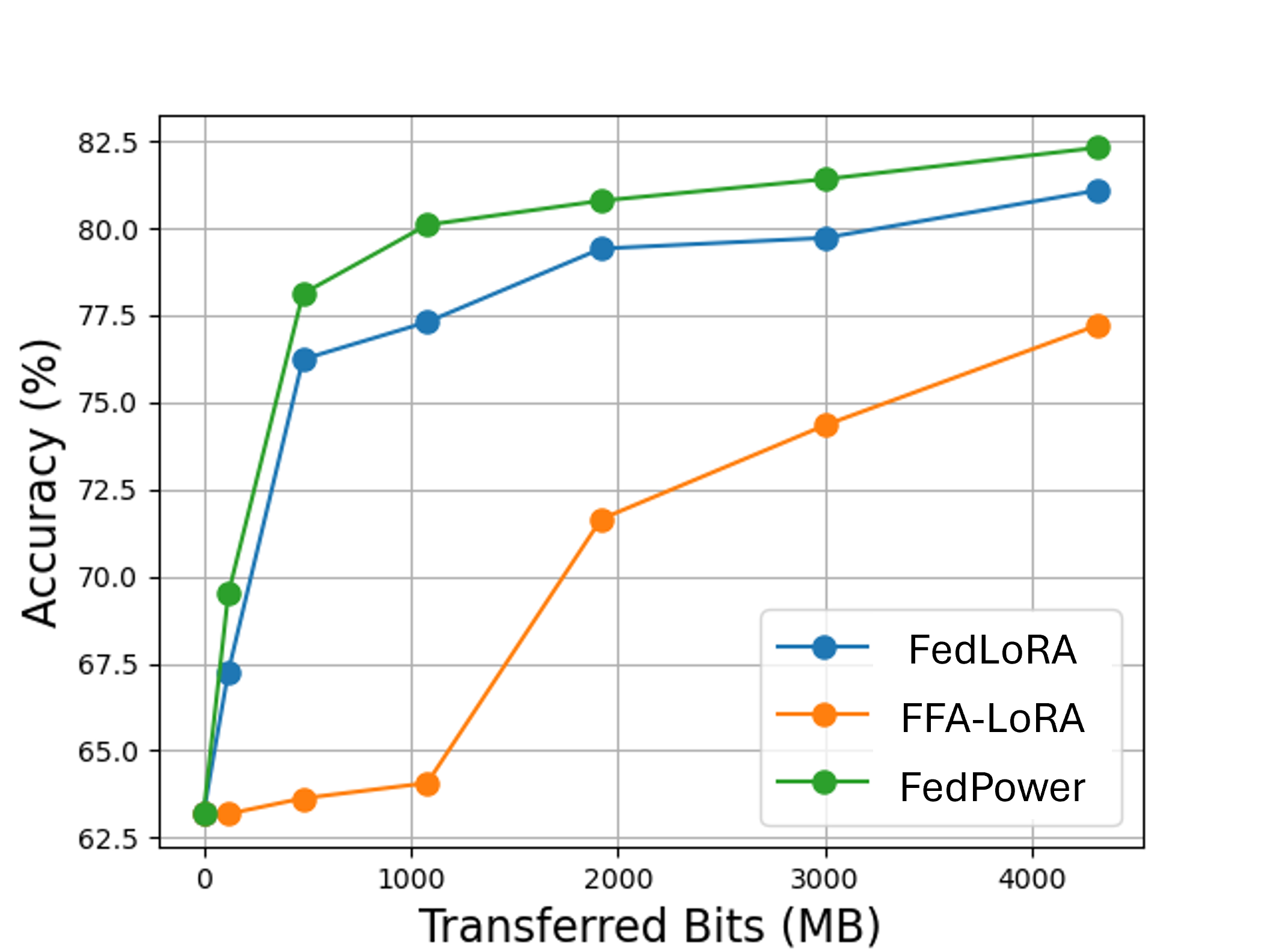}
  \caption{Accuracy vs. bits ($\epsilon = 6$).}
  \label{sec:exp:bit:a}
\end{subfigure}
\begin{subfigure}{0.244\textwidth}
\centering
  \includegraphics[width=\textwidth]{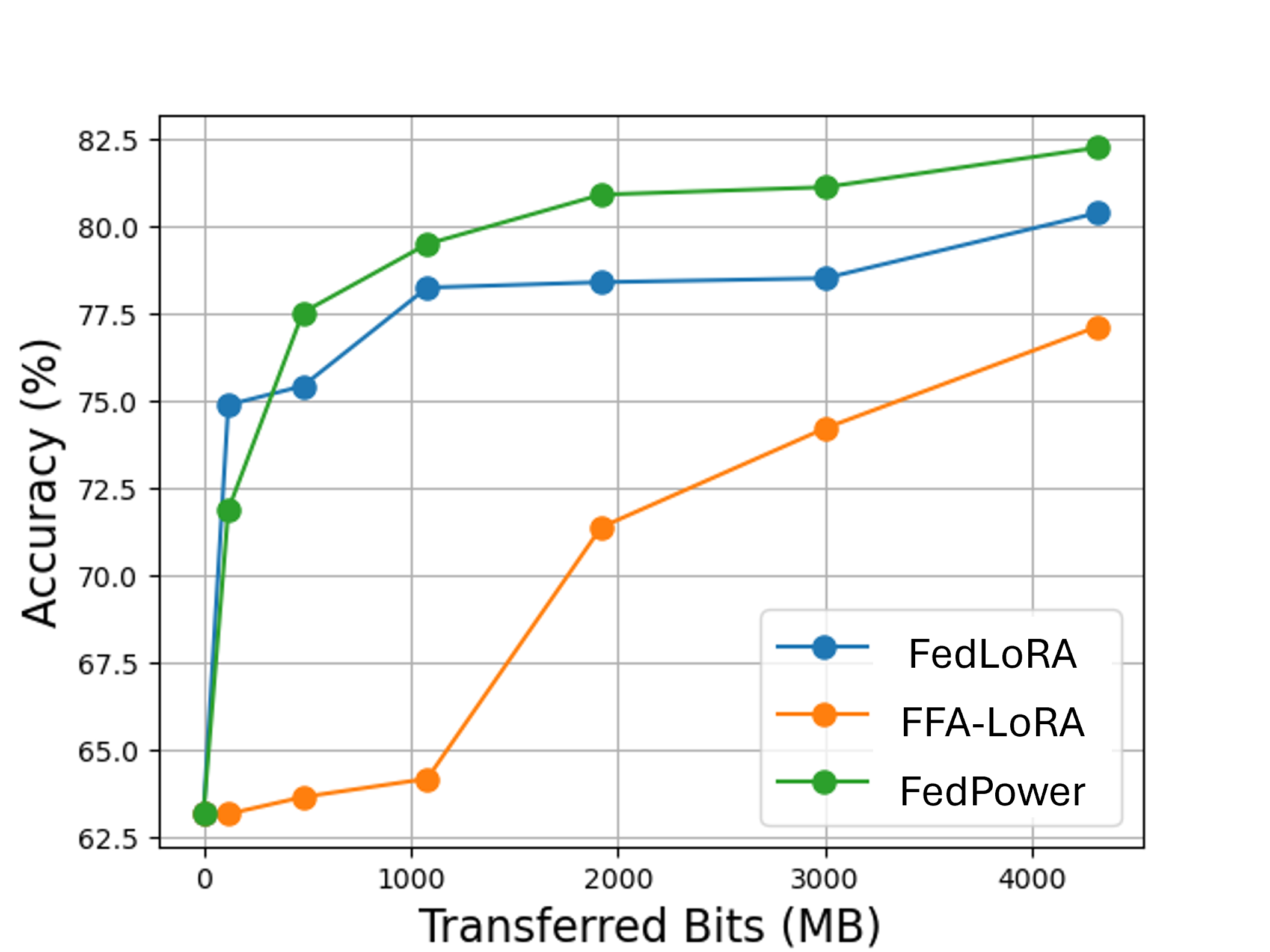}
  \caption{Accuracy vs. bits ($\epsilon = 3$).}
  \label{sec:exp:bit:b}
\end{subfigure}
\Description{Accuracy as a function of global rounds (a and b) and transferred bits (c and d) on QQP with $\epsilon = \{6,3\}$.}
\caption{Accuracy as a function of global rounds (a and b) and transferred bits (c and d) on QQP with $\epsilon = \{6,3\}$.}
\label{sec:exp:convergence_acc}
\end{figure*}

We compare the FL experiment using PowerDP with the following baselines:
\begin{itemize}
    \item \textbf{FedLoRA}: straightforward fine-tuning with LoRA modules, where both of the low-rank components are updated. This method is equivalent to previous works in \cite{zhang2024towards, liu2025differentially, 11216445}. In this case, we directly add DP noise to the aggregated low-rank modules.
    \item \textbf{FFA-LoRA} \cite{sun2024improving}: fine-tuning with LoRA modules, but freezing one low-rank module. Clients only update and share one low-rank module. In this case, we directly add DP noise to only the shared LoRA module.
\end{itemize}

We provide more details of the experiment and the algorithms of the baseline methods in the Appendix. All experiments are done on a cluster of three V100 32GB GPU machines. We repeat all experiments five times using different seeds and report the mean test accuracy across all seeds.

\subsection{Comparison with Existing Baselines}
\label{sec:exp:accuracy}

Table \ref{sec:exp:table_acc} illustrates the privacy-accuracy tradeoff across the GLUE benchmark. In the non-private setting, FFA-LoRA outperforms FedLoRA by $+0.76$ points on average, which can be attributed to reduced aggregation error via freezing one LoRA module. Our proposed FedPower further improves performance by $+0.71$ points (a $+1.47$ point gain over FedLoRA). This highlights the advantage of training and reparametrizing both LoRA modules, overcoming the capacity limitations of FFA-LoRA's restricted learning space.

In private settings, FedPower consistently outperforms all baselines. Because global DP (server-side noise injection) inherently preserves higher baseline accuracy than local DP due to fewer noise addition, achieving consistent gains is particularly challenging. Nevertheless, FedPower establishes a clear advantage, showing the strongest performance under moderate privacy budgets ($\epsilon=9$ and $\epsilon=6$). In addition, FedPower demonstrates strong robustness to noise even under a tight privacy budget ($\epsilon=3$), yielding an average accuracy improvement of $+1.65$ points over FedLoRA and $+0.20$ points over FFA-LoRA.

While FedPower consistently outperforms both baselines across all datasets, FFA-LoRA shows a surprisingly high accuracy ($87.06$) in MNLI out-domain test set. This improvement stems from existing observation that lower learning dimension (from only one LoRA module) improves out-of-domain generalization under the effect of DP noise \cite{tran2025privacy}.

Figure~\ref{sec:exp:convergence_acc} illustrates accuracy as a function of communication rounds and transferred bits for FedLoRA, FFA-LoRA and FedPower. We observe from Figures~\ref{sec:exp:round:a} and \ref{sec:exp:round:b} that FFA-LoRA generally has slower convergence, reaching $80\%$ after $80$ rounds. Meanwhile, FedLoRA requires $30$ rounds and FedPower requires only $20$ rounds to reach the same test accuracy. Even though FFA-LoRA has benefit over communication cost as each client only shares one LoRA module instead of two, Figures~\ref{sec:exp:bit:a} and \ref{sec:exp:bit:b} show that FedLoRA still underperforms both FedLoRA and FedPower given the same limit of bits transferred. In contrast to the suboptimal performance observed from FFA-LoRA, FedPower not only outperforms both baselines with the highest accuracy, but also shows the best tradeoff between accuracy and communication cost as it requires the least the number of communication rounds and transferred bits to reach a desirable target accuracy.

\subsection{The Benefit of In-Processing DP Noise}
\label{sec:exp:subsec:powerdp}
To isolate the specific accuracy-privacy tradeoff benefit of PowerDP, we compare it against alternative DP injection schemes that inject DP noise separately from the factorization process.
\begin{itemize}
    \item \textbf{Input Perturbation:} DP noise is added to the input matrix $W$ before the factorization. By the post-processing properties of differential privacy, the resulting modules $A$ and $B$ inherently satisfy the required privacy guarantees.
    \item \textbf{Output perturbation:} DP noise is added to the output $A$ and $B$ after the factorization, directly protect the outputs with DP. Because the DP noise is applied post-orthonormalization, we calculate the sensitivity of the output and calibrate the noise scale accordingly.
\end{itemize}

For both input and output perturbation, we use the standard non-private power iteration \cite{stewart1981simultaneous} for factorization. We describe the detailed algorithm of input and output perturbation in the Appendix.

As shown in Table \ref{sec:exp:noise_position}, output perturbation results in catastrophic accuracy failure (e.g., $60.59\%$ at $\epsilon=3$). This failure stems from the orthonormalization step in power iteration, which forces the output sensitivity to scale proportionally with the matrix rank, thereby necessitating massive noise injection to satisfy DP guarantees. PowerDP circumvents this by bounding the sensitivity through the projection (Lemma \ref{lemma}), yielding much better privacy-accuracy tradeoff.

\begin{table*}[t]
\caption{Test accuracy of FedPower on $4$ GLUE datasets with different privacy noise injection schemes.}\label{sec:exp:noise_position}
\begin{center}
\begin{tabular}{ | c | c | c c c c c c | }
\hline
Privacy budget & Method & \multicolumn{2}{c}{MNLI} & SST-2 & QQP & QNLI & Average \\
& & Match & Mismatch & & & & \\
\hline\hline
& Input perturbation & 92.61 & 87.88 & 83.27 & 81.26 & 82.81 & 85.56 \\
$\epsilon = 9$ & Output perturbation & 91.40 & 87.41 & 83.26 & 79.83 & 80.45 & 84.47 \\
& PowerDP & \textbf{94.04} & \textbf{89.72} & \textbf{84.16} & \textbf{83.61} & \textbf{84.12} & \textbf{87.13} \\
\hline
& Input perturbation & 92.38 & 87.79 & 82.64 & 81.95 & 81.46 & 85.24 \\
$\epsilon = 6$ & Output perturbation & 89.79 & 83.87 & 82.68 & 52.63 & 53.78 & 72.55 \\
& PowerDP & \textbf{93.72} & \textbf{88.20} & \textbf{84.01} & \textbf{83.12} & \textbf{83.72} & \textbf{86.55} \\
\hline
& Input perturbation & 92.04 & 84.23 & 81.32 & 80.01 & 80.80 & 83.68 \\
$\epsilon = 3$ & Output perturbation & 50.92 & 79.00 & 80.64 & 44.97 & 47.45 & 60.59 \\
& PowerDP & \textbf{93.03} & \textbf{84.99} & \textbf{83.30} & \textbf{81.94} & \textbf{82.61} & \textbf{85.18} \\
\hline
\end{tabular}
\end{center}
\end{table*}

\begin{table*}[t]
\caption{Computational cost and test accuracy of FedPower with various refactorization frequency under $\epsilon=3$.}\label{sec:exp:frequency}
\centering
\small
\begin{tabular}{| c | c c c c |}
\hline
Method & Factorization  & Average aggregation & Computational & Average accuracy \\
& frequency & time per round & overhead & on GLUE \\
\hline\hline
FedLoRA & None & 4.36 s & 0.0 \% & 83.53 \\
FedPower & 10 & 4.41 s & 1.2 \% & 84.37 \\
FedPower & 5 & 4.45 s & 2.2 \% & 85.07 \\
FedPower & 1 & 4.76 s & 9.3 \% & 85.18 \\
\hline
\end{tabular}
\end{table*}

While input perturbation achieves higher accuracy than output perturbation, it still underperforms FedPower equipped with PowerDP. We hypothesize that injecting noise prior to factorization distorts the principal components of the full-rank adaptation matrix, leading to suboptimal singular value estimates. Ultimately, by strategically integrating DP noise during the factorization process, PowerDP yields highly accurate low-rank approximations at an equivalent privacy budget, delivering the best overall tradeoff.

\subsection{Frequency of Refactorization}
\label{sec:exp:subsec:frequency}

We explore the effect of reducing the frequency of refactorization in FedPower. For global rounds that do not perform factorization, we adopt the straightforward aggregation process as described in Equation~\ref{alg:fedlora} (lines 22--23). Similar to FedLoRA, we add DP noise to the resulting aggregated $A$ and $B$.

Table~\ref{sec:exp:frequency} shows the average test accuracy on the GLUE benchmark for which PowerDP is applied every 1, 5, or 10 global rounds. We observe that FedPower with any factorization frequency outperforms FedLoRA on average, demonstrating the increased accuracy benefit of the refactorization process while incurring less than 10\% computational overhead (measured in total runtime). Reducing the frequency of refactorization slightly decreases the test accuracy but comes with less computational overhead. These observations demonstrate the benefit of PowerDP refactorization and its robustness to the choice of factorization frequency. While higher frequency ensures higher accuracy, variants with less frequent refactorization such as every $5$ rounds can offer a favorable tradeoff when computational efficiency is prioritized.

\subsection{Membership Inference Attacks}
\label{sec:exp:subsec:mia}

In this subsection, we aim to demonstrate the effectiveness of FedPower in protecting data samples against various types of membership inference attack. Let $W_{tar}$ be the target released model, and we assume that the attacker has access to an auxiliary dataset $D_{aux}$ which comes from the same data distribution with the private training dataset. We consider three types of inference attacks as described below.

\paragraph{Shadow Model Attack \cite{shokri2017membership}.} The attacker trains $s$ shadow models $W_{sha}^1 \dots W_{sha}^s$ on $s$ non-overlapping subsets of $D_{aux}$. For each shadow model, the attacker records the prediction probability vectors, or confidence scores, for both members (samples used to train $W_{sha}^i$) and non-members. These confidence scores, along with their member or non-member labels, are used to train a set of attack models (binary classifiers). To infer the membership of a target sample $(x,y)$, the adversary feeds $(x,y)$ into $W_{tar}$, extracts the resulting prediction vector, and passes it to the corresponding attack model, which outputs the probability that the sample was a member of the training set.

\paragraph{Loss Based Attack \cite{yeom2018privacy}.} This attack relies on the magnitude of the loss function to predict membership. First, the adversary trains a shadow model $W_{sha}$ on $D_{aux}$ to estimate the training loss of the target model. Then, for each sample $(x, y) \in D_{aux}$, the sample is considered a member of the training set if:
\begin{align}
    \ell (W_{tar}, x,y) < \frac{1}{|D_{aux}|} \sum_{(x', y')\in D_{aux}} \ell (W_{sha}, x,y).
\end{align}

In this case, the averaged prediction loss is used as a threshold to classify if a given sample is part of the training set or not.

\paragraph{Calibration Attack \cite{watson2021importance}.} This attack is an improved version of the loss based attack that addresses the difficulty variance across different data samples. Instead of a global threshold, it uses a per-sample calibration to determine membership. Specifically, the attacker trains $s$ shadow models $W_{sha}^1 \dots W_{sha}^s$ on $s$ non-overlapping subsets of $D_{aux}$. For a target sample $(x,y)$, the adversary observes its loss across the shadow models that did not include $(x, y)$ in their training sets to construct a null distribution of non-member losses. The sample is predicted as a member if its loss on the target model is significantly lower than its expected loss when excluded from training:
\begin{align}
\text{Score}(x,y) = \frac{\ell(W_{tar}, x, y) - \mu_{out}(x,y)}{\sigma_{out}(x,y)} < \tau,
\end{align}
where $\mu_{out}$ and $\sigma_{out}$ are the mean and standard deviation of losses from the out-of-sample shadow models, and $\tau$ is a determined threshold.

\begin{figure}
    \centering
    \includegraphics[width=0.95\linewidth]{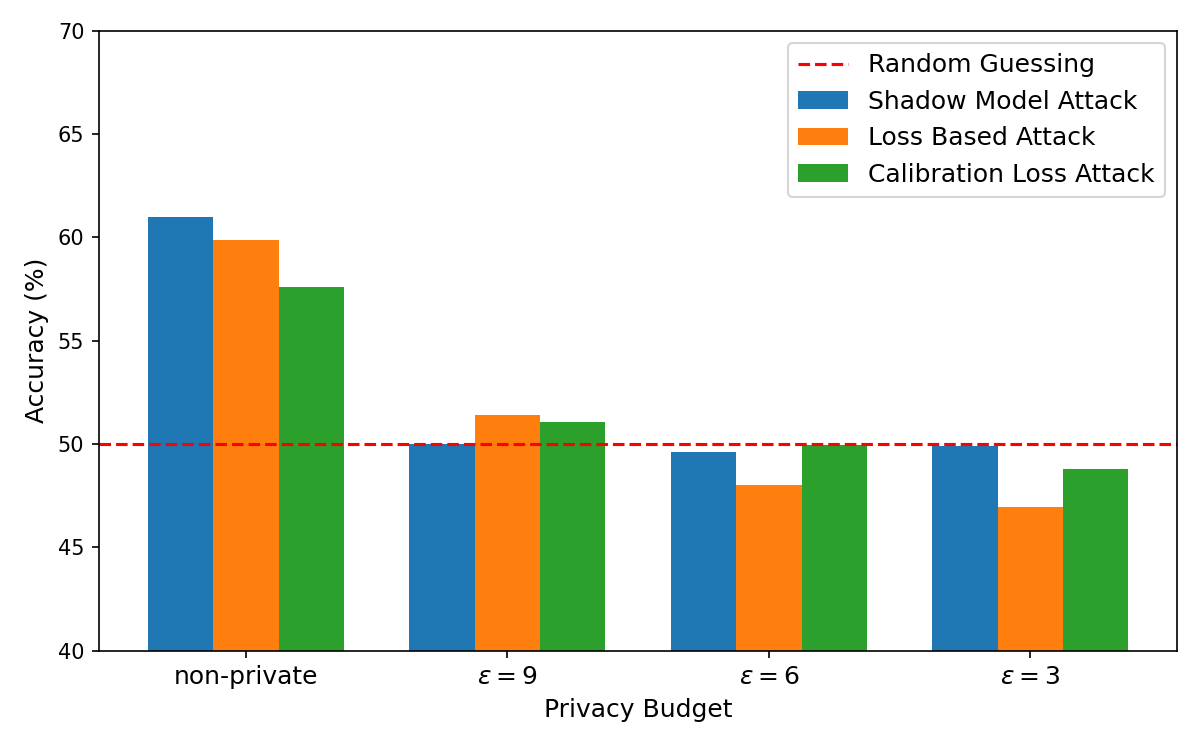}
    \caption{Success rates of three different membership inference attacks in non-private and private settings. We include the random guessing $(0.5)$ as a baseline. Accuracies that are closed to or under the baseline indicates successful protection against the attacks.}
    \Description{Success rates of three different membership inference attacks in non-private and private settings. We include the random guessing $(0.5)$ as a baseline. Accuracies that are closed to or under the baseline indicates successful protection against the attacks.}
    \label{fig:mia}
\end{figure}

Figure~\ref{fig:mia} shows the success rate of each attack, averaged over $5$ random seeds. In the non-private setting, the model exhibits a clear vulnerability to membership leakage, with attack accuracies reaching up to $0.61$, significantly outperforming the $0.5$ random guessing threshold. However, in private settings, the attack accuracy for all three methods collapses to approximately $0.5$, even at a less private setting with $\epsilon=9$. As the privacy budget is further tightened to $\epsilon=6$ and $\epsilon=3$, the attacker's advantage remains random. This convergence toward the random guessing baseline demonstrates that FedPower effectively hides the influence of individual training samples. The FedPower algorithm renders the model's output distribution indistinguishable between members and non-members and provides robust empirical validation of its privacy guarantees.

We report the true positive rate and false positive rate for each membership inference attack in Figure~\ref{tpr}. We observe a positive trend in non-private settings, where the curve is the highest across all three attacks. These results demonstrate the greatest success for the attacker in non-private settings, where the final model memorizes some aspects of the training set and makes it easy for attackers to distinguish members from non-members.

However, the curves begin to drop as DP noise is introduced. Regardless of any privacy budget value, both shadow model attack (Figure~\ref{tpr:shadow}) and calibration loss attack (Figure~\ref{tpr:calibration}) show immediate collapses to the random guessing baseline. The loss based attack shows a clearer separation between $\epsilon = 9$, $\epsilon = 6$, and $\epsilon = 3$ (Figure~\ref{tpr:loss}).
For the configurations with stronger privacy guarantees ($\epsilon = 6$ and $\epsilon = 3$), the loss based attack is successfully neutralized, with the curves converging closely toward the diagonal random guessing line. Even at a more relaxing budget of $\epsilon = 9$, the defense provides a substantial reduction in the attacker’s true positive rate compared to the non-private model. The evaluation of the membership inference attack demonstrates FedPower's robust protection as the privacy budget $\epsilon$ is tightened.

\begin{figure*}[h]
\centering
\begin{subfigure}{0.32\textwidth}
\centering
  \includegraphics[width=\textwidth]{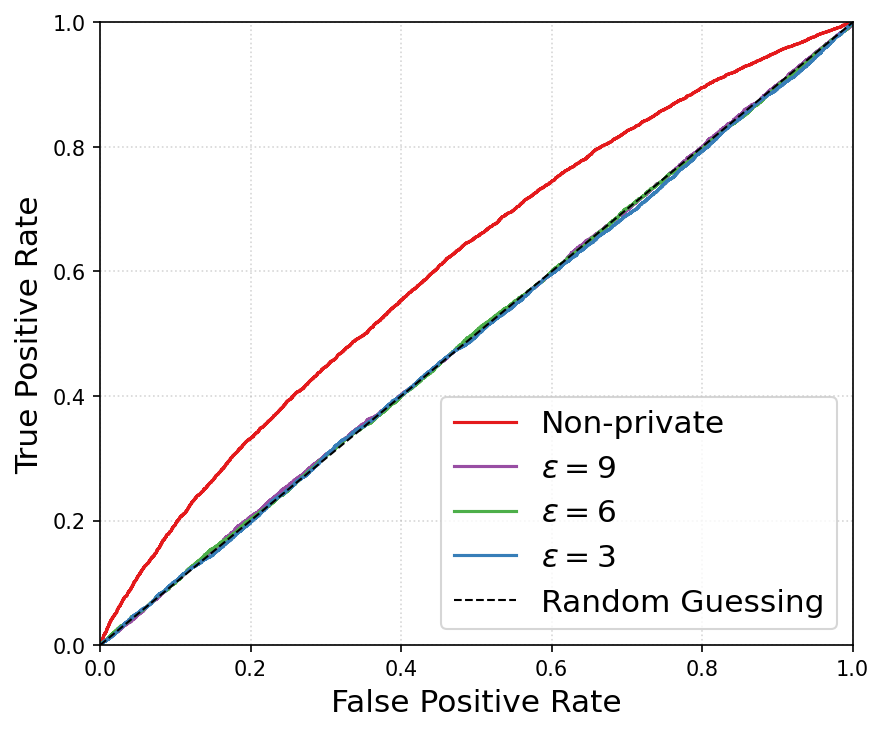}
  \caption{Shadow Model Attack.}
  \label{tpr:shadow}
\end{subfigure}
\begin{subfigure}{0.32\textwidth}
\centering
  \includegraphics[width=\textwidth]{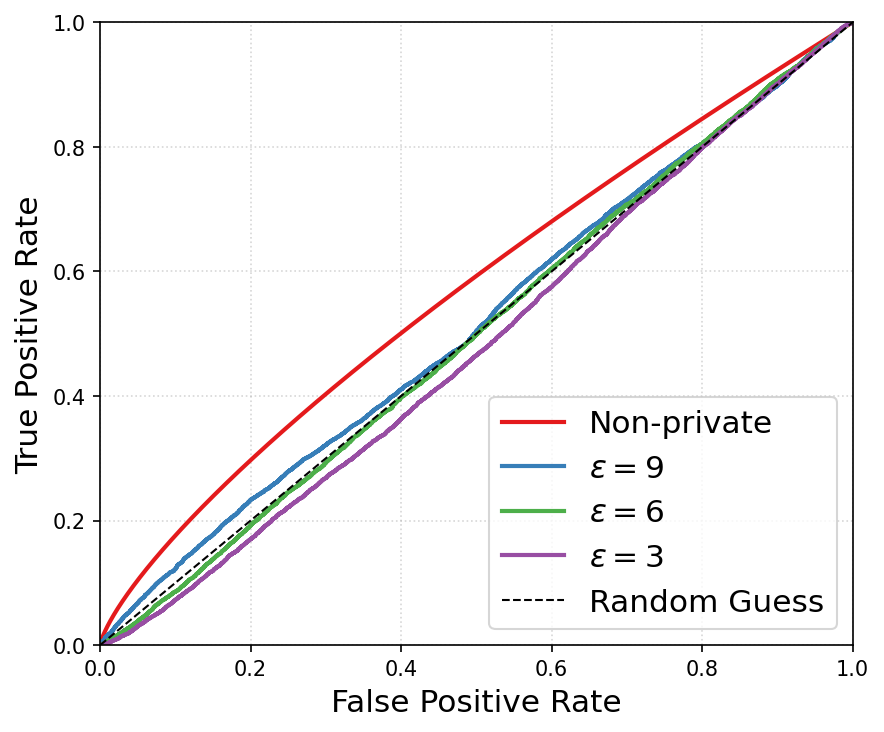}
  \caption{Loss Based Attack.}
  \label{tpr:loss}
\end{subfigure}
\begin{subfigure}{0.32\textwidth}
\centering
  \includegraphics[width=\textwidth]{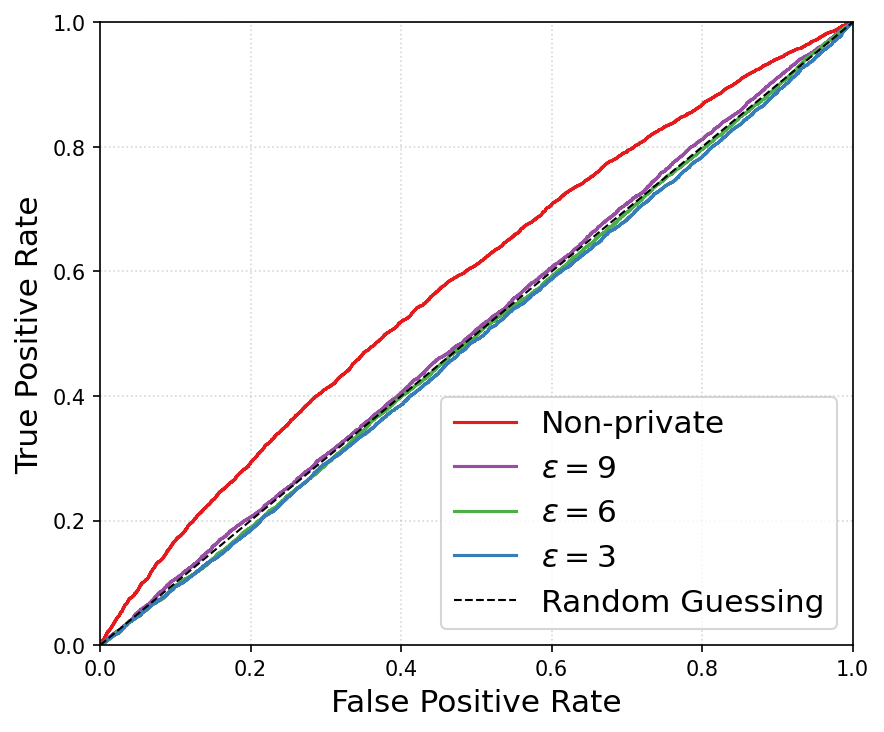}
  \caption{Calibration Loss Attack.}
  \label{tpr:calibration}
\end{subfigure}
\caption{True positive rate vs. False positive rate of each membership inference attack across all privacy budgets.}
\Description{True positive rate vs. False positive rate of each membership inference attack across all privacy budgets.}
\label{tpr}
\end{figure*}

\section{Conclusion and Future Works}
\label{sec:conclude}

In this work, we introduced FedPower, a differentially private federated learning framework that addresses the mathematical mismatch inherent in standard DP-FedLoRA. FedPower entirely eliminates the cross-term aggregation errors that typically compound with DP noise by shifting the server-side aggregation to the merged, full-rank equivalent space. To safely project these aggregated updates back into a communication-efficient low-rank space, we proposed PowerDP, a novel lightweight differentially private refactorization mechanism. By embedding calibrated Gaussian noise directly into the simultaneous subspace iteration sequence, PowerDP successfully mitigates the negative effect of DP noise seen in standard output perturbation methods. Our rigorous theoretical analyses established tight sensitivity bounds, proving that FedPower satisfies both sample-level and client-level differential privacy. Extensive empirical evaluations on the GLUE benchmark demonstrated that FedPower significantly outperforms existing DP-FedLoRA baselines in model utility while maintaining negligible server overhead and providing robust defense against membership inference attacks.

We primarily focused on establishing a robust and highly efficient framework for cross-silo federated learning in this work. Building upon the rigorous client-level DP guarantees already established in our theoretical analysis, a natural direction for future research is exploring FedPower in cross-device federated learning settings. Future studies could empirically investigate the framework's scalability in these massive edge environments and explore what, if any, further adaptations might be beneficial. Additionally, while FedPower demonstrates strong utility under homogeneous (IID) federated data distributions, future work can explore its optimization dynamics under heterogeneous (non-IID) data partitions, further broadening the applicability of private LLM fine-tuning across diverse, real-world decentralized systems.

\bibliographystyle{ACM-Reference-Format}
\bibliography{references}

\newpage

\appendix

\section{Additional Details on Experimental Setup}
\label{app:exp}

Table~\ref{tab:dataset} details the information of the datasets used in the experiments. Each training dataset is split evenly (IID) across $6$ clients for local training. At the end of each global round, the server evaluate the current global model on the test dataset, and we report the final test accuracy at the end of the FL training framework.

\begin{table}[H]
    \caption{Overview of datasets used in the experiments.}
    \label{tab:dataset}
    \centering
    \small
    \begin{tabular}{| c | c c c |}
        \hline
        Dataset &  Number of & Training & Testing \\
        & classes & set size & set size \\
        \hline\hline
        {MNLI (matched)} & \multirow{2}{*}{3} & \multirow{2}{*}{392{,}702} & 9,815 \\
        {MNLI (mismatched)} & & & 9,832 \\
        SST-2 & 2 & 67,349 & 872 \\
        QQP & 2 & 363,846 & 40,430 \\
        QNLI & 2 & 104,743 & 5,463 \\
        \hline
    \end{tabular}
\end{table}

Algorithm~\ref{app:power} describes the standard non-private power iteration that is used in FedPower with input and output perturbation in Section~\ref{sec:exp:subsec:powerdp}. The algorithm starts with a random Gaussian matrix $Q$, and iteratively refines the basis by alternating between projecting the weight matrix onto the current subspace and orthonormalizing the result to prevent numerical instability. This process forces the subspace to align with the directions of greatest variance in the data. By the end of $k$ iterations, the algorithm produces an orthonormal basis $A$ and a corresponding projection $B$, which together capture the most significant structural information of the original matrix $W$ at the specified rank $r$.

\begin{algorithm}[H]
    \caption{Power Iteration (non-private) \cite{stewart1981simultaneous}}
    \label{app:power}
    \begin{algorithmic}[1]
    \STATE {\textbf{Input:}} $W \in \mathbb{R}^{m \times n}$, number of iterations $k$, rank $r$.
    \STATE Initialize $Q \in \mathbb{R}^{r \times n}$ from Gaussian distribution.
    \FOR {$k$ iterations}
        \STATE $P \gets W Q^T$.
        \STATE $P \gets$ Orthonormalize columns of $P$.
        \STATE $Q \gets P^T W$.
    \ENDFOR
    \STATE $A \gets$ Orthonormalize rows of $Q$.
    \STATE $B \gets W A^T$.
    \RETURN $A, B$.
\end{algorithmic}
\end{algorithm}

We provide the detailed algorithms for the two baselines used in the experiments: FedLoRA in Algorithm~\ref{alg:fedlora:dp} and FFA-LoRA in Algorithm~\ref{alg:ffa-lora:dp}.

\begin{algorithm}[H]
\caption{FedLoRA \cite{zhang2024towards}}
\label{alg:fedlora:dp}
\begin{algorithmic}[1]
    \STATE {\textbf{Input:}} number of global rounds $T$, number of local rounds $L$, global weight $W^0 \in \mathbf{R}^{m \times n}$, rank $r$, learning rate $\eta$, noise scale $\sigma$, clipping threshold $C$.
    \STATE \textcolor{blue}{// Server initializes global LoRA modules.}
    \STATE $A^0 \sim \mathcal{N}(0, \sigma^2)^{r \times n}$.
    \STATE $B^0 \gets \mathbf{0}_{m \times r}$.
    \FOR {global round $t \gets 1 \ldots T$}
        \STATE Server subsamples client set $\mathcal{C}^t$ with sampling rate $q_c$.
        \FOR {client $i \in \mathcal{C}$ in parallel}
            \STATE \textcolor{blue}{// Client initialize local LoRA modules.}
            \STATE $A^{t,0}_i \gets A^{t-1}$
            \STATE $B^{t,0}_i \gets B^{t-1}$.
            \FOR {local round $l \gets 1 \ldots L$}
                \STATE Client subsamples batch $\mathcal{B}^{t,l}$ with sampling rate $q_s$.
                \STATE \textcolor{blue}{// Client computes loss.}
                \STATE $\mathcal{L}_i^{t,l} \gets \frac{1}{| \mathcal{B}^{t,l} |} \sum_{(x,y)\in \mathcal{B}^{t,l}} \ell(A^{t,l-1}_i, B^{t,l-1}_i, x, y)$.
                \STATE \textcolor{blue}{// Client updates local LoRA modules.}
                \STATE $A^{t,l}_i \gets A^{t,l-1}_i - \eta \nabla_a \mathcal{L}_i^{t,l}$
                \STATE $B^{t,l}_i \gets B_i^{t,l-1} - \eta \nabla_b \mathcal{L}^{t,l}_i$.
            \ENDFOR
            \STATE Clients send $A^{t,L}_i, B^{t,L}_i$ to server.
        \ENDFOR
        \STATE Clip $A^{t,L}_i$ and $B^{t,L}_i$ by $C$.
        \STATE \textcolor{blue}{// Server aggregates and updates global LoRA modules.}
        \STATE $\displaystyle A^t \gets  \frac{1}{|\mathcal{C}|} \sum_{i\in \mathcal{C}} A_i^{t,L}$.
        \STATE $\displaystyle B^t \gets \frac{1}{|\mathcal{C}|} \sum_{i\in \mathcal{C}} B_i^{t,L}$.
        \STATE \textcolor{blue}{// Server adds DP noises.}
        \STATE $\displaystyle A^t \gets A^t + \mathcal{N}(0, \sigma^2 C^2)$.
        \STATE $\displaystyle B^t \gets B^t + \mathcal{N}(0, \sigma^2 C^2)$.
    \ENDFOR
    \STATE \textcolor{blue}{// Server merges LoRA for final release.}
    \STATE $W^T \gets W^0 + B^T A^T$.
    \STATE {\textbf{Output:}} $W^T$. 
\end{algorithmic}
\end{algorithm}

As described in Algorithm~\ref{alg:fedlora:dp}, the server initializes two small low-rank matrices, $A$ and $B$. During each global round, selected clients download these global LoRA modules, perform local training over $L$ rounds to minimize their specific loss, and then return the updated modules to the server. To ensure privacy, the server clips the client updates to a threshold $C$ to bound their influence and injects Gaussian noise to the clipped modules. Most importantly, the server then aggregates each clipped LoRA modules independently. Once the global training rounds are complete, the final low-rank product $B^T A^T$ is merged back into the original weights, resulting in a fine-tuned model that benefits from decentralized data without exposing individual client updates.

\begin{algorithm}[t]
\caption{FFA-LoRA \cite{sun2024improving}}
\label{alg:ffa-lora:dp}
\begin{algorithmic}[1]
    \STATE {\textbf{Input:}} number of global rounds $T$, number of local rounds $L$, global weight $W^0 \in \mathbf{R}^{m \times n}$, rank $r$, learning rate $\eta$, noise scale $\sigma$, clipping threshold $C$.
    \STATE \textcolor{blue}{// Server initializes global LoRA modules.}
    \STATE $A \sim \mathcal{N}(0, \sigma^2)^{r \times n}$.
    \STATE $B^0 \gets \mathbf{0}_{m \times r}$.
    \FOR {global round $t \gets 1 \ldots T$}
        \STATE Server subsamples client set $\mathcal{C}^t$ with sampling rate $q_c$.
        \FOR {client $i \in \mathcal{C}$ in parallel}
            \STATE \textcolor{blue}{// Client initialize local LoRA module.}
            \STATE $B^{t,0}_i \gets B^{t-1}$.
            \FOR {local round $l \gets 1 \ldots L$}
                \STATE Client subsamples batch $\mathcal{B}^{t,l}$ with sampling rate $q_s$.
                \STATE \textcolor{blue}{// Client computes loss.}
                \STATE $\mathcal{L}_i^{t,l} \gets \frac{1}{| \mathcal{B}^{t,l} |} \sum_{(x,y)\in \mathcal{B}^{t,l}} \ell(A, B^{t,l-1}_i, x, y)$.
                \STATE \textcolor{blue}{// Client updates local LoRA modules.}
                \STATE $B^{t,l}_i \gets B_i^{t,l-1} - \eta \nabla_b \mathcal{L}^{t,l}_i$.
            \ENDFOR
            \STATE Clients send $B^{t,L}_i$ to server.
        \ENDFOR
        \STATE Clip $B^{t,L}_i$ by $C$.
        \STATE \textcolor{blue}{// Server aggregates and updates global LoRA modules.}
        \STATE $\displaystyle B^t \gets \frac{1}{|\mathcal{C}|} \sum_{i\in \mathcal{C}} B_i^{t,L}$.
        \STATE \textcolor{blue}{// Server adds DP noises.}
        \STATE $\displaystyle B^t \gets B^t + \mathcal{N}(0, \sigma^2 C^2)$.
    \ENDFOR
    \STATE \textcolor{blue}{// Server merges LoRA for final release.}
    \STATE $W^T \gets W^0 + B^T A$.
    \STATE {\textbf{Output:}} $W^T$. 
\end{algorithmic}
\end{algorithm}

FFA-LoRA is structurally identical to FedLoRA in the use of federated averaging, client-side local training, and server-side DP (clipping and noise injection). However, it introduces a main difference by freezing the $A$ module throughout the entire training process. While FedLoRA iteratively updates both $A$ and $B$, FFA-LoRA treats $A$ as a static Gaussian projection, requiring clients to only compute gradients for and transmit the $B$ module. Upon receiving the LoRA $B$ modules from clients, the server clips, adds noise and averages the $B$ modules accordingly.

\end{document}